\newcommand{\R}{\mathbb{R}}
\newcommand{\scrH}{\mathcal{H}}
\newcommand{\scrL}{\mathcal{L}}
\newcommand{\scrU}{\mathcal{U}} 
\newcommand{\scrX}{\mathcal{X}}
\newcommand{\scrY}{\mathcal{Y}}
\newcommand{\scrZ}{\mathcal{Z}}
\newcommand{\p}{\intercal}
\newcommand{\compPowerRV}{X}
\newcommand{\compPowerSet}{\scrX}
\newcommand{\compPowerSetIndices}{\{1,2,\ldots,|\compPowerSet|\}}
\newcommand{\compPowerSetSize}{|\compPowerSet|}
\newcommand{\compPowerDistribution}{\pi}
\newcommand{\compPowerDistributionSpace}{\Pi}
\newcommand{\obsv}{y}
\newcommand{\obsvRV}{Y}
\newcommand{\obsvSet}{\scrY}
\newcommand{\obsvSetSize}{|\obsvSet|}
\newcommand{\info}{Z}
\newcommand{\infoSet}{\scrZ}
\newcommand{\reward}{r}
\newcommand{\transMatrix}{P}
\newcommand{\obsvMatrix}{B}
\newcommand{\action}{u}
\newcommand{\actionSet}{\{1,2\}}
\newcommand{\stopTime}{\tau}
\newcommand{\stopNumber}{l}
\newcommand{\numStops}{L}
\newcommand{\stopNumberSet}{\{1,2,\ldots,\numStops\}}
\newcommand{\policy}{\mu}
\newcommand{\totalReward}{J}
\newcommand{\discountFactor}{\rho}
\newcommand\Tau{\Gamma}
\newcommand{\numStopsCost}{\Omega}
\newcommand{\contset}{D}
\newcommand{\stopset}{M}
\newcommand{\bracketRound}[1]{\left(#1\right)}
\newcommand{\bracketSquare}[1]{\left[#1\right]}
\newcommand{\Prob}{\mathbb{P}}
\newtheorem{theorem}{Theorem}
\newtheorem{lemma}{Lemma}
\newtheorem{definition}{Definition}
\title{\LARGE \bf
Multiple-stopping time Sequential Detection for Energy Efficient Mining in Blockchain-Enabled IoT
}
\author{Anurag~Gupta\thanks{Anurag Gupta and Vikram Krishnamurthy are with the School of Electrical \& Computer Engineering, Cornell University, Ithaca NY, 14853, USA.  (e-mail: ag2589@cornell.edu; vikramk@cornell.edu).},
	and Vikram~Krishnamurthy,~\IEEEmembership{Fellow,~IEEE}}
\begin{document}

\maketitle
\thispagestyle{empty}
\pagestyle{empty}

\begin{abstract}
What are the optimal times for an Internet of Things (IoT) device to act as a blockchain miner? The aim is to
minimize the energy consumed by low-power IoT devices
that log their data into a secure (tamper-proof) distributed
ledger. We formulate a multiple stopping time Bayesian sequential detection problem 
to address energy-efficient blockchain mining for IoT devices. The objective is to identify $\numStops$ optimal stops for mining, thereby maximizing the probability of successfully adding a block to the blockchain;
we also present a model to optimize the number of stops (mining instants). The formulation is equivalent to a multiple stopping time POMDP. Since POMDPs are in general computationally intractable to solve, we show mathematically using submodularity arguments that the optimal mining policy has a useful structure: 1) it is monotone in belief space, and 2) it exhibits a threshold structure, which divides the belief space into two connected sets. Exploiting the structural results, we formulate a computationally-efficient linear mining policy for the blockchain-enabled IoT device. We present a policy gradient technique to optimize the parameters of the linear mining policy. Finally, we use synthetic and real Bitcoin datasets to study the performance of our proposed mining policy. We demonstrate the energy efficiency achieved by the optimal linear mining policy in contrast to other heuristic strategies.
\end{abstract}
\begin{IEEEkeywords}
    Internet of Things (IoT), blockchain, optimal mining, partially observed Markov decision process (POMDP), multiple stopping time, maximum likelihood estimator (MLE), monotone likelihood ratio (MLR), total positivity of order 2 (TP2), value iteration, stochastic gradient descent, Bellman equation, submodularity.
\end{IEEEkeywords}

\section{Introduction}
    Blockchain is a decentralized distributed ledger technology~\cite{2020:AS}. Each block in the chain contains a set of transactions and a cryptographic hash of the previous block. This creates a chain of blocks that are secure: it is difficult for a single entity to take control of the network or to alter past transactions. An important element of blockchain is Proof of Work (PoW)~\cite{2008:SN}. PoW is a consensus algorithm used in blockchain to add new blocks to the chain; this requires miners to solve a cryptographic puzzle. The first miner to solve the puzzle is rewarded monetarily.     
    However, PoW in blockchain requires a large amount of computational power and leads to high energy consumption. This is detrimental to using blockchain in an IoT application.
    
    We focus on blockchain-enabled IoT applications wherein IoT devices are resource-constrained \cite{2021:SM-et-al}, e.g. wireless sensor networks. The combination of blockchain and IoT can create a secure and decentralized network of devices, enabling efficient and transparent data sharing and transactions~\cite{2018:ON}. For example, in a sensor network, blockchain can provide an immutable and tamper-proof\footnote{Sensor networks are used in the field of agriculture to log the farming practices like the use of pesticides. An immutable and tamper-proof database will improve trust between farmers, manufacturers, retailers and consumers~\cite{2023:JB}.} data storage platform for storing sensor readings; the decentralized storage of data also makes it immune to a single point of failure~\cite{2023:EK}. An IoT network consists of heterogeneous devices: some have low power and low energy requirements, like Raspberry Pi, while others have high processing power and energy requirements, like PCs and servers. Low-power IoT devices are typically deployed for data collection, whereas high-power devices are used for time-critical applications. Irrespective of their processing power, the devices are capable of mining in a blockchain.
    For a typical IoT application~\cite{2021:SM-et-al}, CPU usage without mining is around 3\%-6\%. However, mining in blockchain increases the CPU usage of IoT devices to 30\%-50\%; this is roughly a ten times increase in power consumption. This is detrimental for IoT devices as they have limited energy resources. Additionally, blockchain mining has an adverse impact on the environment due to its energy consumption~\cite{2021:LB-et-al}. This motivates the study of energy-efficient mining strategy in blockchain for an IoT device: the IoT device wants to optimize its mining time instants so as to maximize its probability of adding a new block in the blockchain. By doing so, the IoT device prevents the waste of energy on mining when there is high competition to add a new block to the blockchain.
    
   \subsection*{Main Idea. Multiple Stopping Time POMDP} 
   The problem we address is: \textit{What are the optimal times for an IoT device to act as a blockchain miner?} The aim is to minimize the energy consumed by low-power IoT devices that register their data into a secure (tamper-proof) distributed ledger. In IoT applications, IoT devices have to log their data in the blockchain multiple times, depending on the data rate. Moreover, energy constraints may limit the lifespan of IoT devices in wireless sensor networks. These factors motivate the study of multiple mining time selections for IoT devices. We formulate a multiple stopping time Bayesian sequential detection problem as a partially observed Markov decsion process (POMDP)~\cite{2016:VK} to address energy-efficient blockchain mining for IoT devices. The objective is to identify $\numStops$ optimal stops for mining, thereby maximizing the probability of successfully adding a block to the blockchain. We assume that the dynamics of the POMDP are not affected by the mining activity of the IoT device. This assumption is reasonable since the computing power of the IoT device is too small to affect the overall rate of new blocks in the blockchain. Multiple-stopping time POMDP has not been studied much in the literature. Compared to the single-stopping time problem, the multiple-stopping time problem is a more complex generalization because using the single-stopping policy repeatedly results in a suboptimal solution.
    
    Using the optimal policy of the resulting POMDP ensures that the IoT device does not waste energy on mining when there are several miners competing to mine a new block. We also formulate an optimization problem to optimize the number of mining instants for a blockchain-enabled IoT device. This is important in IoT applications such as wireless sensor networks, as each IoT device senses data at a different rate. Hence, the amount of data that needs to be logged in the blockchain varies with IoT devices. 

    \subsection*{Submodular Structure of the Energy-Efficient Mining Problem}
    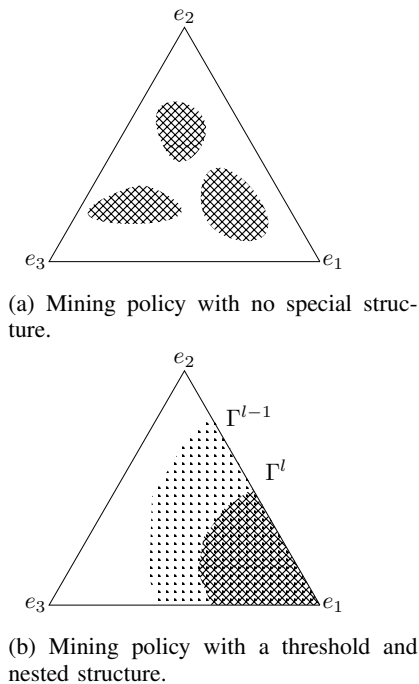
\begin{figure}
        \centering
        \begin{subfigure}{0.3\textwidth}
        \scalebox{0.9}{
        \begin{tikzpicture}
            \tikzstyle{arrow} = [draw, -];
            \tikzstyle{arrowmark} = [draw, ->];
            \tikzstyle{arrow1} = [draw, -{Implies},double];
            \def\scale{4};
            \coordinate (e1) at (1*\scale,0);
            \coordinate (e2) at (0.5*\scale,0.866*\scale);
            \coordinate (e3) at (0,0);
            \node at (1.05*\scale,0) {$e_1$};
            \node at (0.5*\scale,0.906*\scale) {$e_2$};
            \node at (-0.2,0) {$e_3$};
            \draw (e1) -- (e2) -- (e3) -- (e1);
            \fill [pattern=crosshatch, pattern color=black] plot [smooth cycle, tension=1, scale=0.4, shift={(0.4*\scale,0.4*\scale)}] coordinates {(0,0) (1,1) (2.5,1) (3,0)};
            \fill [pattern=crosshatch, pattern color=black] plot [smooth cycle, tension=1, scale=0.4, shift={(1.4*\scale,0.6*\scale)}] coordinates {(0,0) (1,1) (2.5,-1) (1,-1.5)};
            \fill [pattern=crosshatch, pattern color=black] plot [smooth cycle, tension=1, scale=0.4, shift={(0.8*\scale,1.2*\scale)}] coordinates {(1,1) (2.5,0.5) (2,-1) (1,-0.5)};
        \end{tikzpicture}
        }
        \caption{Mining policy with no special structure.}
        \label{fig:threshold-policy-arbitrary}
        \end{subfigure}
        \hspace{1cm}
        \begin{subfigure}{0.3\textwidth}
        \scalebox{0.9}{
        \begin{tikzpicture}
            \pgfdeclarepatternformonly{thick dots}{\pgfpointorigin}{\pgfqpoint{1mm}{1mm}}{\pgfqpoint{1mm}{1mm}}%
            {
            \pgfpathcircle{\pgfpointorigin}{1pt}
            \pgfusepath{fill}
            }
            \tikzstyle{arrow} = [draw, -];
            \tikzstyle{arrowmark} = [draw, ->];
            \tikzstyle{arrow1} = [draw, -{Implies},double];
            \def\scale{4}
            \coordinate (e1) at (1*\scale,0);
            \coordinate (e2) at (0.5*\scale,0.866*\scale);
            \coordinate (e3) at (0,0);
            \node at (1.05*\scale,0) {$e_1$};
            \node at (0.5*\scale,0.906*\scale) {$e_2$};
            \node at (-0.2,0) {$e_3$};
            \draw (e1) -- (e2) -- (e3) -- (e1);
            \fill [pattern=crosshatch, pattern color=black] plot [tension=1] coordinates {
            (0.6*\scale,0) (0.55*\scale,0.12*\scale) (0.57*\scale,0.25*\scale) (0.65*\scale,0.36*\scale) (0.75*\scale,0.43*\scale) (1*\scale,0) (0.6*\scale,0)};
            \node[black] at (0.84*\scale,0.5*\scale) {$\Tau^{l}$};
            \fill [pattern=thick dots, pattern color=black] plot [tension=1] coordinates {
            (0.4*\scale,0)  (0.37*\scale,0.1*\scale)  (0.38*\scale,0.4*\scale) (0.5*\scale,0.6*\scale) (0.6*\scale,0.7*\scale) (1*\scale,0) (0.4*\scale,0)};
            \node[black] at (0.74*\scale,0.7*\scale) {$\Tau^{l-1}$};
        \end{tikzpicture}
        }
        \caption{Mining policy with a threshold and nested structure.}
        \label{fig:threshold-policy-structure}
        \end{subfigure}
        \caption{Visual illustration of the structure of a mining policy. The triangle represents the two-dimensional belief space for the POMDP. The shaded regions indicate the belief state where it is optimal to mine in the blockchain. In general, the optimal mining states are arbitrary, as in (a) and computing the shaded regions is intractable. The main contribution of this paper is to propose sufficient conditions for submodularity so that the optimal policy has the threshold and nested structure as in (b). Here, $\Tau_l$ denotes the threshold for the $l^{th}$ mining instant (see Sec.\ref{sec:structural-results} for details). This structure is then exploited to develop  policy gradient algorithms.} 
    \end{figure}
    In general, solving a multiple-stopping time POMDP is P-SPACE hard~\cite{1996:DB-JT} as the optimal mining policy may not have a special structure as shown (Fig.~\ref{fig:threshold-policy-arbitrary}). Hence, often POMDPs are solved via heuristics. In this paper, 
    we show mathematically that the optimal policy for the multiple stopping time POMDP has a special structure: 1) optimal mining policy is monotone in belief space, 2) optimal mining policy has a threshold structure, thereby partitioning the belief space into two connected sets ( Fig.\ref{fig:threshold-policy-structure}). These structural results are proved via submodularity arguments on the stochastic dynamic programming equation of the POMDP. The important practical consequence is that this structure can be exploited to design policies which are linear in belief state and efficiently implementable on IoT devices. 
    The structure also facilitates the development of computationally-efficient policy gradient algorithms that can be implemented on IoT devices. Firstly, we establish both the necessary and sufficient conditions that the parameters of the linear mining policy must meet in order to satisfy the structural results. Subsequently, we convert these parameters into spherical coordinates, enabling us to formulate an unconstrained optimization problem for optimizing the parameters.
    
    \subsection*{Related works}
    The benefits of integrating blockchain and IoT is discussed in \cite{2023:EK}. \cite{2018:ON} describes an architecture for integrating IoT and blockchain, and \cite{2022:AA-et-al} proposes a medium access control (MAC) protocol for IoT-blockchain network setup.
    \cite{2021:SM-et-al} conducts simulations, and \cite{2020:NK-et-al} uses a prototype implementation to study the overall system performance while integrating blockchain with IoT.    
    
    Related to the modeling of the blockchain system, \cite{2022:XS-et-al} and \cite{2020:YL-et-al} employ a Markov process model to study performance and network security in a distributed ledger.
    The evolution of cryptocurrency as a Hidden Markov Model is explored in~\cite{2022:XS-et-al}, \cite{2022:KK-et-al}.
    
    Regarding optimal mining strategies in blockchain, 
    \cite{2020:RS-et-al} formulates the mining problem with resource cost as a dynamic game over an infinite horizon and shows that it is optimal to mine together.
    \cite{2020:GY-et-al}, \cite{2021:TW-et-al} uses reinforcement learning to optimize selfish mining strategy in the blockchain. 
    \cite{2016:AK-et-al} utilizes a game theoretic approach to study Nash equilibrium in blockchain mining when miners can hide their newly mined nodes.
    \cite{2023:YZ-et-al} formulates various selfish mining strategy in blockchain as a Markov decision process and solves it to obtain a lower bound on their performance.
    \cite{2011:MR} analyzes the optimal mining time for pooled mining reward systems. The author argues that the optimal mining time depends on the miners' incentives and the network's transaction volume.

    \cite{2021:WM-et-al} surveys the problem of energy overhead in integrating IoT and blockchain both from computational and communication viewpoints.   \cite{2019:CS-et-al} presents a clustering method to improve energy efficiency for blockchain mining in an IoT application.
    
    Our energy-efficient mining problem for a blockchain-enabled IoT device utilizes tools from \cite{2018:VK-et-al}, which provides useful structural results on the optimal policy for multiple stopping time POMDP. The structural result allows solving the POMDP for large state space. Another approach to reducing the complexity of MDP with large state space is presented in~\cite{2023:TB-UM}: the authors approximate the value function by projecting it into a lower-dimensional subspace. Also, one can use stochastic dominance to compute bounds on HMM filter with reduced computational complexity~\cite{2014:VK-RJ}. Multiple stopping time POMDP has been used for targeting ads, intrusion prevention~\cite{2022:KH-RS}, active sensing~\cite{2017:DS-UM}, sensor scheduling~\cite{2011:GA-et-al} and detecting line outage in power systems~\cite{2017:GR-et-al}. To the best of our knowledge, the multiple stopping time POMDP approach to study energy-efficient mining strategy in blockchain-enabled IoT devices has not been explored in the literature. The approach provides a computationally-efficient way to maximize energy efficiency for a blockchain-enabled IoT device. 
    
    \label{sec:introduction}
    \subsection*{Organization and Main Results}
     Sec.\ref{sec:problem-statement} describes the energy-efficient mining problem in blockchain for IoT applications; it is formulated as a multiple stopping time POMDP in Sec.\ref{sec:model}. We explore the comparison between blockchain-enabled IoT devices and multiple stopping time POMDP in Sec.\ref{sec:discussion-model}. Sec.\ref{sec:assumptions} discusses our model assumptions, and in Sec.\ref{sec:main-results}, we derive structural results on the optimal policy for the energy-efficient mining problem in blockchain. Finally, using the structural results, Sec.\ref{sec:simulations} solves for an optimal linear mining policy for a blockchain-enabled IoT device to maximize energy efficiency on a synthetic and real Bitcoin dataset. We also compare the optimal mining policy with other heuristic mining strategies. 
    	\label{sec:organization}
    	
    \section{Energy-Efficient Mining in Blockchain}
    \label{sec:problem-statement}
    \begin{figure}
        \centering
        \scalebox{0.9}{
        \begin{tikzpicture}
            \tikzstyle{arrow} = [draw, -];
            \tikzstyle{arrowmark} = [draw, ->];
            \tikzstyle{arrow1} = [draw, -{Implies},double];
            \node (blockchain) at (-0.7,0) [draw, rectangle, align=center] {\includegraphics[scale=0.42]{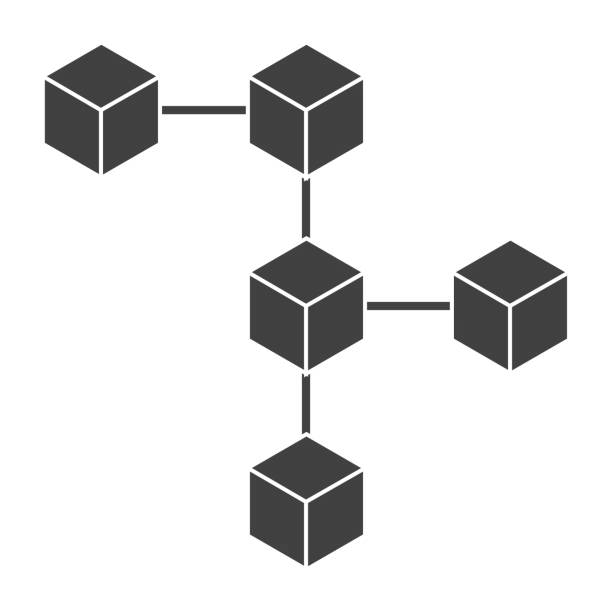}};
            \node (blockchain-text) at (-0.7,-1.6) [text width=3cm, align=center] {Blockchain};
            \node (pow) at (-0.7,2.5) [draw, rectangle, text width=1.5cm, align=center] {PoW\\difficulty\\ level};          
            \node (observable) at (-0.7,0) [draw, rectangle, dashed, text width=2.5cm, minimum height=6cm, align=center] {\vspace{6.5cm}\\ \textbf{Observable}};
            \draw [arrow1] (blockchain)--(pow);                 
            \foreach \x/\y in {{1/(2,2.5)},{2/(2,0.8)}}
            {
                \node (miner\x) at \y  {\includegraphics[scale=0.2]{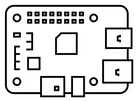}};
                \draw [arrow] (blockchain)--(miner\x);
            }
            \foreach \x/\y in {{3/(2,-2.5)},{4/(2,-0.8)}}
            {
                \node (miner\x) at \y  {\includegraphics[scale=0.13]{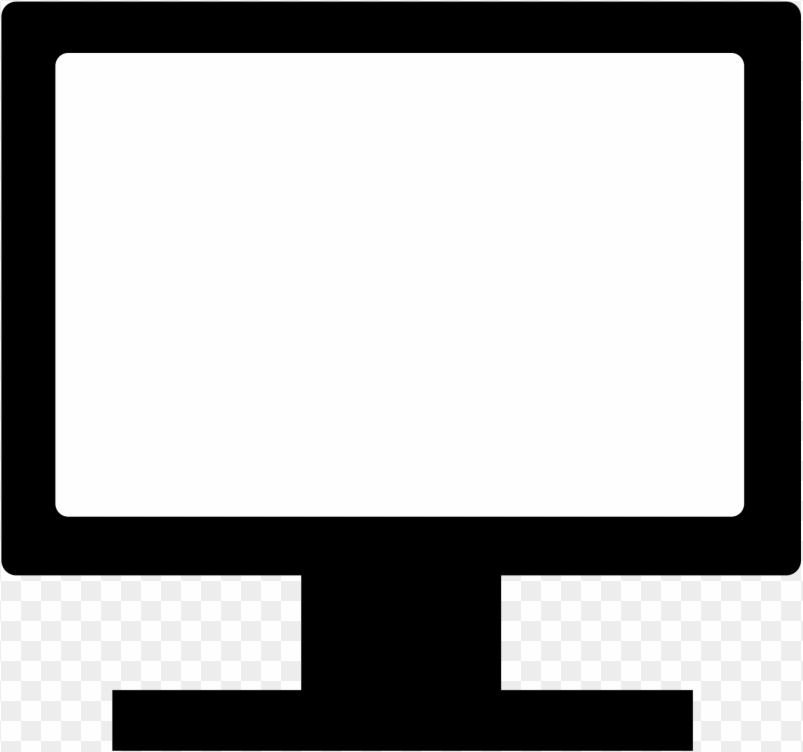}};
                \draw [arrow] (blockchain)--(miner\x);
            }
            \node (miner-dm) at (-3.5,0) {\includegraphics[scale=0.2]{raspberrypi.png}};
            \node (miner-dm-text) at (-3.5,-1.2) [text width=3cm, align=center] {IoT device\\(Decision maker)};
            \draw [arrow] (blockchain)--(miner-dm);
            \node (unobservable) at (2.1,0) [draw, rectangle, dashed, text width=2.2cm, minimum height=6cm, align=center] {\vspace{6.5cm}\\ \textbf{Unobservable}};
            \node (low-power) at (-3.6,3) {\includegraphics[scale=0.2]{raspberrypi.png}};
            \node (low-power-text) at (-3.6,2.3) [text width=5cm, align=center] {Low power devices};
            \node (high-power) at (-3.6,1.5) {\includegraphics[scale=0.13]{pc.png}};
            \node (high-power-text) at (-3.6,0.8) [text width=5cm, align=center] {High power devices};
        \end{tikzpicture}
        }
        \caption{Schematic of the energy-efficient mining problem for blockchain-enabled IoT applications with heterogeneous devices. The IoT device (decision maker) wants to decide when to mine in the blockchain to maximize its probability of adding a new block. The IoT device cannot observe the computing power of other miners, nor can it observe whether they are mining or not. It can only observe the PoW difficulty level of the blockchain, a noisy observation of the other miners' actions. Hence, the IoT device has to decide its action based only on the available information to maximize its probability of adding a new block to the blockchain.}
        \label{fig:block-diagram}
    \end{figure}
    Recall that our aim is to determine the optimal mining times for an IoT device to act as a blockchain miner.
    The purpose is to minimize the energy consumed by low-power IoT devices that log their data into a secure (tamper-proof) distributed ledger.
    
    We consider a blockchain-based distributed ledger for IoT applications, where multiple miners compete to mine the next block (see Fig.~\ref{fig:block-diagram}). Mining involves solving a cryptographic puzzle to satisfy PoW: a consensus algorithm for blockchain. An important parameter associated with PoW is the PoW difficulty level; it determines the complexity of the cryptographic puzzle to be solved to add subsequent blocks to the blockchain. The blockchain protocol adjusts the PoW difficulty level to regulate the rate of new blocks: when many miners participate in mining, the total computing power invested in the blockchain is significant, hence, decreasing the expected mining time for the next block; this ensues increase in the PoW difficulty level. As each miner has different computing power and incentives, they individually decide whether or not to invest their computing power for mining. Deciding the mining time instants is crucial for integrating blockchain in an IoT application, as IoT devices have limited energy resources. 
    
    We consider the energy-efficient mining problem from the perspective of a single miner. We model the evolution of blockchain and decisions made by the miner as a discrete-time POMDP. Specifically, the energy-efficient mining problem in the blockchain is formulated as a multiple-stopping time POMDP. The reward function encodes the miner's probability of solving the PoW puzzle faster than all other miners. The miner aims to optimize its mining policy to maximize its probability of adding the next block to the blockchain, thereby minimizing the wastage of energy resources. 
    
    Sec.\ref{sec:model} formulates the energy-efficient mining in blockchain for IoT devices as a discrete-time multiple stopping time POMDP. We explore the comparison between blockchain-enabled IoT and optimal stopping time POMDP in Sec.\ref{sec:discussion-model}. 
    
    \subsection{POMDP model for the energy-efficient mining problem in blockchain}
    \label{sec:model}
    In this section, we formulate the energy-efficient mining problem in blockchain as a multiple-stopping time POMDP. We assume that the dynamics of the POMDP are not affected by the mining activity of the IoT device. This assumption is reasonable since the computing power of the IoT device is too small to affect the overall rate of new blocks in the blockchain. We discuss this assumption in detail in~\ref{assumption:c-small}.
    Let $t=0,1,2,\ldots$ denote discrete time.
    \subsubsection{System state $\compPowerRV_t$ and the initial state distribution $\compPowerDistribution_0$} The system state $\compPowerRV_t\in\compPowerSet$ denotes the total computing power invested by all the miners in the blockchain at time $t$ with the initial distribution denoted by the pmf $\compPowerDistribution_0\in\R^{\compPowerSetSize}$. Here, $\compPowerSet=\{1,2,\ldots,\compPowerSetSize\}$ denotes the set of all possible system states. When the total computing power $\compPowerRV_t$ is large, then the mining activity in the blockchain is also large. As discussed in~\eqref{eq:obsv-matrix}, an individual device can only observe the total computing power $\compPowerRV_t$ in noise. Hence, determining the optimal mining time is non-trivial. 
    
    \subsubsection{Transition matrix $\transMatrix$}
    We model the evolution of the total computing power in the blockchain as a time-homogeneous Markov chain with transition matrix $\transMatrix$. This Markov assumption is widely used~\cite{2022:KK-et-al}. We will justify the Markov assumption in Sec.\ref{sec:simulations} using a real Bitcoin dataset. 
    
    For $i,j\in\compPowerSet$, elements of the transition matrix~$\transMatrix\in\R^{\compPowerSetSize\times\compPowerSetSize}$ are
    \begin{align}
        \label{eq:trans-matrix}
        \transMatrix(i, j)=\Prob\bracketRound{\compPowerRV_{t+1}=j \mid \compPowerRV_t=i}
    \end{align}        
    \subsubsection{Observation $\obsvRV_t$}
    An individual IoT device is unaware of the total mining activity at each time instant. Therefore, the IoT device cannot observe the total computing power $\compPowerRV_t$ invested in the blockchain at time $t$. Instead, the IoT device observes the PoW difficulty level $\obsvRV_t\in\obsvSet=\{1,2,\ldots,\obsvSetSize\}$, which can be viewed as a noisy measurement  of the total computing power $\compPowerRV_t$. The relationship between $\compPowerRV_t$ and $\obsvRV_t$ is described by the distribution $\obsvMatrix$.
    \begin{align}
        \label{eq:obsv-matrix}
        \obsvMatrix(i,\obsv)=\Prob(\obsvRV_t=\obsv \mid \compPowerRV=i),\;i\in\compPowerSet,\obsv\in\obsvSet
    \end{align}
    \subsubsection{Policy $\policy$}
    The IoT device decides its mining instants in the blockchain using the policy $\policy$. The IoT device decides whether or not to mine at time $t$ as a function of $\info_t=\{\pi_0,\action_0,\obsvRV_1,\ldots,\action_{t-1},\obsvRV_t\}\in\infoSet_t$. Here, $\info_t$ denotes the history of information available at time $t$, and $\infoSet_t$ denotes the set of all possible history of information at time $t$.
    Due to limited power, the IoT device can mine at $\numStops$ time instants over the infinite time horizon. Let $\stopNumber\in\stopNumberSet$ index the mining instants. The mining policy of the IoT device is modelled as $\policy:\infoSet\times\stopNumberSet\rightarrow\scrU$. Here $\scrU=\actionSet$ denotes the action space of the IoT device. To decide the $\stopNumber^{th}$ mining time, the IoT device chooses an action at time $t$ as $\action_t=\policy(\info_t,\stopNumber)$. At time $t$, $\action_t=1$ corresponds to don't mine, and $\action_t=2$ corresponds to mine. 
    
    \subsubsection{Reward $\reward:\compPowerSet\times\scrU\rightarrow \R$}
    \label{sec:reward}
    The reward function incentivizes the IoT device to choose its mining time judiciously. It encodes the IoT device's probability of adding a block to the blockchain. An  advantage of our submodularity-based mathematical formulation is that we only require the reward function to satisfy the following general structure\footnote{Our structural results can be generalized to the case when the difference in reward $\reward(\compPowerRV,2)-\reward(\compPowerRV,1)$ is decreasing in $\compPowerRV$.}:
    \begin{align}
        \label{eq:reward}
        \reward(\compPowerRV,1)=0
        ,\;\reward(\compPowerRV,2) \text{ is decreasing in $\compPowerRV$}
    \end{align}
    Therefore, if the total computing power invested in the blockchain is $\compPowerRV$ and the IoT device decides not to mine in the blockchain, it receives a reward of $\reward(\compPowerRV,1)=0$. If the IoT device decides to mine, it earns a reward $\reward(\compPowerRV,2)$, which encodes the probability of adding a new block in the blockchain. 
    
    We now justify~\eqref{eq:reward} for blockchain-enabled IoT devices. Typically, the transaction fee is considered in the literature as the reward function~\cite{2014:NH},\cite{2017:DN}. The expected transaction fee earned depends on the probability of adding a block to the blockchain. As typical in blockchain-enabled IoT devices~\cite{2021:WM-et-al}, the devices are responsible for both generating the data and mining a new block, making it appropriate to consider the probability of adding a block as the reward. This probability is directly proportional to the computing power of the IoT device and inversely proportional to the total computing power of the blockchain\footnote{Solving PoW involves an exhaustive search over all possible values of nonce~\cite{2008:SN}. The search continues until a desired number of zeros is found at the beginning of the hash code. Therefore, a miner with higher computing power can search more nonce per unit of time, leading to a higher probability of successfully adding a block to the blockchain. Thus, the probability that an IoT device will add a block to the blockchain can be modelled as a Bernoulli random variable with $p=\frac{1}{\compPowerRV}$. This assumes that the computing power of the IoT device is normalized to 1, and it is small compared to the total computing power $\compPowerRV$.} $\compPowerRV$. 
    The assumption~\eqref{eq:reward} that the reward is decreasing emerges naturally with the above justification of the transaction fee. The assumption also enables us to characterize mathematically the optimal mining policy in Sec.\ref{sec:structural-results}. In the absence of this assumption, one would have to rely on heuristic mining policies, several of which are discussed in Sec.\ref{sec:simulations}.

    \subsubsection{Cumulative reward $\totalReward$}
    For an IoT device, choosing the mining instants myopically to maximize the reward $\reward$ is not satisfactory as that does not exploit knowledge of the Markov evolution of the state (total computing power~$\compPowerRV_t$). In this paper, we choose the mining times by maximizing the cumulative reward over a finite but random horizon resulting in a multiple-stopping time stochastic control problem. From a practical point of view, the cumulative reward represents the overall energy efficiency of the IoT device. The goal of the IoT device is to maximize its cumulative reward over $\numStops$ mining instants.  
    \begin{align}
    \label{eq:cumulative-reward}
    \begin{aligned}                         &\totalReward_\policy(\compPowerDistribution_0) = \mathbb{E}_\policy\left[\sum_{t=0}^{\stopTime_1-1}\discountFactor^{t} \reward(\compPowerRV_t,1)+\discountFactor^{\stopTime_1} \reward(\compPowerRV_{\stopTime_1},2)+\ldots
    \right.\\
    &\left.
    \ldots+\sum_{t=\stopTime_{\numStops-1}+1}^{\stopTime_\numStops-1}\discountFactor^{t} \reward(\compPowerRV_t,1)+\discountFactor^{\stopTime_\numStops} \reward(\compPowerRV_{\stopTime_\numStops},2)\mid \compPowerDistribution_0\right]\\
    &\stopTime_{i+1} = \min\{t>\stopTime_{i}:\action_t=2\},\; i\in\{1,\ldots,\numStops\},\;\stopTime_0 = 0
    \end{aligned}
    \end{align}
    Here, $\stopTime_i$ represents the $i^{th}$ mining instant in the blockchain. The discount factor $\rho$ represents the IoT device's decreased value assigned to rewards obtained in the future. One can also consider risk-averse cost function as in~\cite{2021:NU-et-al}. The optimal policy ensures that the IoT device maximizes the cumulative reward for adding a new block to the blockchain, thereby maximizing energy efficiency. 
    
    It is important to emphasize that the total computing power $\compPowerRV_t$ is not observed by the IoT device. It only observes the PoW difficulty level $\obsvRV_t$, which is a noisy measurement of the total computing power. Put simply, by observing a noisy Markov chain, what are the optimal $\numStops$ mining instants? Therefore, \eqref{eq:cumulative-reward} is a multiple-stopping time POMDP. 

    \subsection{Belief State Representation}
    \label{sec:belief-state}
    The POMDP for the energy-efficient mining problem in blockchain (Sec.\ref{sec:model}) can be formulated as the standard Markov decision process (MDP) by introducing a belief state. The belief state $\compPowerDistribution_t$ is the posterior  probability distribution of the underlying state given the  observations until the present time. It is updated recursively using the Bayesian update~\cite{2016:VK} as new observations are received.
    \begin{align}
        \label{eq:bayesian-update}
        \begin{aligned}
        \compPowerDistribution_{t+1}&=T(\compPowerDistribution_t,\obsvRV_t)\\
        T(\compPowerDistribution,\obsv)&=\frac{\obsvMatrix_y\transMatrix^\intercal\compPowerDistribution}{\sigma(\compPowerDistribution,\obsv)},\quad \sigma(\compPowerDistribution,\obsv)=\mathbb{1}_\compPowerSetSize^\intercal \obsvMatrix_\obsv\transMatrix^\intercal\compPowerDistribution
        \end{aligned}
    \end{align}
    Here, $\obsvMatrix_\obsv=\operatorname{diag}(\obsvMatrix(1,\obsv),\obsvMatrix(2,\obsv),\ldots,\obsvMatrix(\compPowerSetSize,\obsv))$ where $\obsvMatrix$ is the observation matrix~\eqref{eq:obsv-matrix}; and $\mathbb{1}_\compPowerSetSize$ represents the $\compPowerSetSize$-dimensional column vector of ones and its transpose is denoted as $\mathbb{1}_\compPowerSetSize^\intercal$.
    
    In the MDP formulation, one designs the policy $\policy$ as a function of belief state $\compPowerDistribution_t\in\Pi$. Here, $\Pi$ is a simplex, also known as belief space. It is well-known that the belief state is a sufficient statistic~\cite {2016:VK}, and designing a policy as a function of the belief state yields the same optimal solution. Using belief state facilitates analysis, but because the belief space is a simplex, it yields an MDP with continuous state-space $\Pi$. In this paper, we use the equivalent MDP formulation of the energy-efficient mining problem in blockchain using belief state to derive our structural results in Sec.\ref{sec:structural-results}. The latter is used to design a linear mining policy as a function of the belief state. Algorithm~\ref{algo:pomdp-simulator} provides the steps to compute the cumulative reward for the energy-efficient mining problem for a given mining policy $\policy$.
    \begin{algorithm}        
     \caption{Simulating energy-efficient mining problem in blockchain using belief state $\compPowerDistribution_t$ given the mining policy $\policy$}
     \begin{algorithmic}[1]
     \REQUIRE $\compPowerDistribution_0,\compPowerSet,\obsvSet,\transMatrix,\obsvMatrix,\reward,\discountFactor,\numStops$ and an upper limit on the horizon length $T$
      \STATE Initialize $\stopNumber\leftarrow 1,\;J\leftarrow 0$
      \FOR {$t = 0, 1, 2, \ldots, T$}
      \STATE Compute $\action_t\leftarrow \policy(\compPowerDistribution_t,\stopNumber)$
      \IF {($\action_t = 2$)}
        \STATE $J\leftarrow J+\rho^t\reward(\compPowerDistribution_t,2)$, $\stopNumber\leftarrow \stopNumber+1$.
          \IF {($\stopNumber = \numStops$)}
          \RETURN $J$
          \ENDIF
      \ELSE 
        \STATE $J\leftarrow J+\rho^t\reward(\compPowerDistribution_t,1)$
      \ENDIF
      \STATE Generate a new observation $\obsv_t$ and compute $\compPowerDistribution_{t+1}$ using~\eqref{eq:bayesian-update}.
      \ENDFOR
     \end{algorithmic} 
     \label{algo:pomdp-simulator}
     \end{algorithm}
	
	\subsection{Equivalent formulation as a discounted-cost POMDP}
	The multiple-stopping time POMDP for the energy-efficient mining in blockchain can be formulated as an infinite-horizon POMDP. This is achieved by augmenting a fictitious absorbing state $\compPowerSetSize+1$ with the continue reward $\reward(\compPowerSetSize+1,1)=0$. When the last stop is made, the belief state $\compPowerDistribution_t$ (ref. Sec.\ref{sec:belief-state}) transitions to $e_{\compPowerSetSize+1}$. Here, $e_{\compPowerSetSize+1}=(0,\ldots,0,1)\in\R^{\compPowerSetSize+1}$. The cumulative reward~\eqref{eq:cumulative-reward} for the multiple stopping POMDP is equivalent to $\totalReward_\policy(\compPowerDistribution_0) = \mathbb{E}_\policy\left[\sum_{t=0}^{\stopTime_1-1}\discountFactor^{t} \reward(\compPowerRV_t,1)+\discountFactor^{\stopTime_1} \reward(\compPowerRV_{\stopTime_1},2)+
		\ldots\right.$ $\left.+\sum_{t=\stopTime_{\numStops-1}+1}^{\stopTime_\numStops-1}\discountFactor^{t} \reward(\compPowerRV_t,1)+\discountFactor^{\stopTime_\numStops} \reward(\compPowerRV_{\stopTime_\numStops},2)+\right.\\\left.\sum_{t=\stopTime_{\numStops}+1}^{\infty}\discountFactor^{t} \reward(\compPowerSetSize+1,1)\mid \compPowerDistribution_0\right]$. In the standard form of POMDP, the transition matrix depends on the input. To obtain an input-dependent transition matrix, one can use the modified state $(\stopNumber,\compPowerRV),\compPowerRV\in\compPowerSet,\stopNumber\in\stopNumberSet$. Here, $\stopNumber$ denotes the stop number and $\compPowerRV$ denotes the original state (ref. Sec.\ref{sec:model}). To specify the new transition matrix, we order the modified state as:
		\begin{align*} ((1,1),\ldots,(1,\compPowerSetSize),(2,1),\ldots,(2,\compPowerSetSize),\ldots,(\numStops,1),\ldots,\\(\numStops,\compPowerSetSize),\compPowerSetSize+1))\end{align*} 
		For this ordering, the transition matrix with $\action=1$~(ref.~\eqref{eq:reward}) is given by $\widetilde{\transMatrix}_{\action=1}=\operatorname{diag}\{\transMatrix,\ldots,\transMatrix,1\}$. Here, operator $\operatorname{diag}$ is used to construct a block diagonal matrix, and $\transMatrix$ is defined in~\eqref{eq:trans-matrix}. The transition matrix with $\action=2$~(ref.~\eqref{eq:reward}) is given by $\widetilde{\transMatrix}_{\action=2}=\begin{bmatrix}
		0 &\transMatrix & & &  \\
		0 & 0 &\transMatrix & &\\
		\vdots & & \ddots & &\vdots\\
		0 & \cdots & 0 &\transMatrix &0\\
		0 & \cdots & 0 & 0 & 1_{\compPowerSetSize\times 1}\\
		0 & \cdots & 0 & 0 & 1
		\end{bmatrix}$
    \subsection{Discussion. Multiple Stopping Time Model for Blockchain-Enabled IoT}
    \label{sec:discussion-model}
    We now discuss the POMDP model in the context of blockchain for IoT.
    The combination of IoT and blockchain enables secure, transparent and scalable data sharing amongst a large number of users~\cite{2023:EK}\cite{2018:DF-KM}. We consider the example of a sensor network consisting of heterogeneous IoT devices (see Fig.\ref{fig:block-diagram}): low-power devices like Raspberry Pi and high-power devices like PCs. Low-power devices are typically used for data collection at remote locations, while high-power devices monitor time-critical applications. In our setup, the devices in the sensor network use a blockchain platform, like Etherium~\cite{2017:CD}, to log their data. The IoT devices have to compete to solve a PoW faster than other miners to add a new block in the blockchain. This improves the security of the blockchain: it is difficult to tamper with transactions in the blockchain. However, PoW involves solving a cryptographic puzzle which is energy-intensive. As low-power IoT devices have limited energy resources, they need to optimize their mining time instants to maximize their probability of adding a new block to the blockchain. This would minimize the wastage of energy by resource-constrained IoT devices. We described our POMDP formulation for the energy-efficient mining problem in blockchain for IoT applications in Sec.\ref{sec:model}. We now discuss the model parameters in the context of IoT and blockchain.
    \subsubsection{Markovian system dynamics for mining  in blockchain}
    The probability of a miner adding a new block to the blockchain is determined by the total computing power invested in it. Therefore, it is crucial for an IoT device to keep track of the total computing power. Our approach involves modeling the total computing power invested in the blockchain as a Markov chain with a transition matrix P. This matrix captures how the computing power changes over time as individual miners make decisions based on their own trade-offs between mining cost and reward. 
    
    \textit{Remarks. Estimating the transition matrix. } Even though the actual total computing power in the blockchain is not directly observable, it can still be estimated based on the rate of new blocks and the PoW difficulty level. \cite{2023:blockchain} includes a record of the estimated total computing power in the past, which can be used to estimate the transition matrix. This can be done by grouping the historical data into a specified number of states and applying a maximum likelihood estimator (MLE) to obtain the transition matrix.
    \subsubsection{PoW difficulty level as a noisy observation of the system state}
    Recall that the observations are the PoW difficulty level, a noisy observation of the system state.
    With a larger total computing power, new blocks are mined faster on average, and thus, the blockchain protocol adjusts the PoW difficulty level to maintain a constant rate of new blocks. As the IoT device cannot observe the total computing power invested in the blockchain, it
    uses the PoW difficulty level to update the belief about the total computing power.
    
    \textit{Remarks. Estimating the observation distribution. } \cite{2023:blockchain} provides a historical record of the estimated total computing power and the PoW difficulty level in the blockchain. One can use an MLE to estimate the observation distribution $\obsvMatrix$ from the data. 
    \subsubsection{Probability of adding a new block in the blockchain}
    The IoT device wants to mine the blockchain to log its sensor readings while maximizing its energy efficiency, which is defined as the probability of adding the next block to the blockchain. This preference is modelled as the reward in the optimal stopping time problem within the framework of POMDP. The reward function is proportional to the computing power of the IoT device and inversely proportional to the total computing power invested in the blockchain if the IoT device decides to mine. Otherwise, it receives no reward.
    \subsubsection{Mining policy and the total number of mining instants}
    We model a single IoT device as a decision maker and optimize its mining time instants so as to maximize its energy efficiency. Due to the energy constraint imposed on the IoT device, it can only engage in mining for a finite number of time instants. Consequently, the IoT device seeks to increase its likelihood of adding a new block to the blockchain in order to minimize energy wastage. 
    \subsection{Optimizing the number of mining instants in blockchain}
    In Sec.\ref{sec:model}, we presented our model for the energy-efficient mining problem in the blockchain. Our model assumed that the number of mining instants $\numStops$ was fixed and known to the IoT device. In a realistic scenario, the IoT device also has to optimize the number of mining instants $\numStops$. This is because different sensors (IoT devices) in an IoT application record data at different rates based on their task. The amount of data that needs to be logged in the blockchain is proportional to the data rate. 
    The optimization problem to optimize the number of mining instants $\numStops$ is:
    \begin{align}
    \label{eq:optimize-numstops}
            \max_{\numStops}\totalReward_{\policy^*,\numStops}(\compPowerDistribution_0)-\numStopsCost(\numStops)
    \end{align}
    Here, $\totalReward_{\policy^*,\numStops}(\compPowerDistribution_0)$ denotes the optimal cumulative reward~\eqref{eq:cumulative-reward} when number of mining instants is $\numStops$; $\numStopsCost$ is the cost function for choosing a particular $\numStops$. We assume that $\numStopsCost$ is an increasing and convex function of $\numStops$. This is because the energy consumed increases with the number of mining instants $\numStops$. Also, as the energy consumption increases, the size of the battery increases and incurs additional cost; this motivates the convexity of $\numStopsCost$. It can be empirically verified that $\totalReward_{\policy^*,\numStops}$ is concave in $\numStops$. Therefore, \eqref{eq:optimize-numstops} is a convex optimization problem. Although $\numStops$ is discrete-valued, we can solve~\eqref{eq:optimize-numstops} in continuous domain and compare the nearest integer solutions to obtain the optimal number of mining instants.
    
     To summarize, we formulated the energy-efficient mining problem in blockchain as a multiple-stopping time POMDP. Due to the curse of dimensionality, it is difficult to solve the optimal mining policy. So, in Sec.\ref{sec:structural-results}, we derive structural results, which would be exploited in Sec.\ref{sec:simulations} for obtaining an optimal linear mining policy.
    \section{Structural Results for Energy-Efficient Mining in Blockchain}
    \label{sec:structural-results}
    This section presents structural results for the energy-efficient mining problem in blockchain (ref. Sec.\ref{sec:problem-statement}).
    Sec.\ref{sec:assumptions} discussed the model assumptions to derive the structural results. In Sec.\ref{sec:main-results}, we first show that the optimal mining policy in blockchain has a threshold structure (Theorem~\ref{thm:monotonicity}). Sec.\ref{sec:implications} discusses the significance of the structural results for blockchain-enabled IoT devices. Sec.\ref{sec:linear-policy} exploits Theorem~\ref{thm:monotonicity} to design a linear mining policy. This is followed by necessary and sufficient conditions on the parameters (Theorem~\ref{thm:parameter-condition}) of the linear mining policy to satisfy the structural results. Optimizing the parameters of the linear policy corresponds to solving a constrained optimization problem which is difficult. Hence, Sec.\ref{sec:spherical-coordinates} describes the parameters of the linear mining policy in spherical coordinates. The spherical coordinates simplify the problem of optimizing the parameters to an unconstrained optimization problem. Sec.\ref{sec:spsa} discusses the policy gradient algorithm to optimize the parameters of the linear mining policy in spherical coordinates. The main outcome of this section is to construct a computationally efficient, optimal linear mining policy for the blockchain-enabled IoT device. This is achieved by exploiting the structural results for the selection of the policy parameters.

    \subsection{Assumptions on multiple-stopping time POMDP model} 
    \label{sec:assumptions}
    We now discuss our assumptions on the model for the energy-efficient mining problem in blockchain for IoT applications. To understand our assumptions, we need to define the property of total positivity of order 2 (TP2).
    \begin{definition}[Total positivity of order 2 (TP2) \cite{2016:VK}]
    \label{def:tp2}
        A stochastic matrix $A$ is TP2 if all the second-order minors are non-negative, i.e., the determinants $\begin{vmatrix}
            A_{i_1j_1} & A_{i_1j_2}\\
            A_{i_2j_1} & A_{i_2j_2}
            \end{vmatrix}\geq 0,\;\forall i_1<i_2, j_1<j_2$
        .Here, $A_{ij}$ denotes the $(i,j)^{th}$ element of the matrix $A$.
    \end{definition}
    The assumptions~\ref{assumption:slow-mc}-\ref{assumption:c-small} will serve as the basis for deriving the structural properties of the optimal policy $\policy^*$ in Sec.\ref{sec:structural-results}.
    \begin{enumerate}[label=(A\arabic*)]            
        \item \label{assumption:slow-mc} The transition matrix $\transMatrix$ is totally positive of order 2 (TP2)~(Definition~\ref{def:tp2}). To satisfy the TP2 assumption, we impose two conditions on the Markov chain $\compPowerRV_t$: (1) it varies slowly with time, i.e., diagonal terms are dominant, (2) the transition matrix $\transMatrix$ has a tri-diagonal structure. 

        {\em Justification:} If there are no collusions among the miners, the mining activity changes slowly with time. Hence, the tri-diagonal assumption is valid, and it can be satisfied by using a small enough sampling time and/or appropriately binning the states of the Markov chain $\compPowerRV_t$. 
        \item \label{assumption:obsv-tp2} Observation distribution $\obsvMatrix$ (defined in~\eqref{eq:obsv-matrix}) is totally positive of order~2~(TP2). 
        
        \textit{Justification:} Since the observations are the PoW difficulty level, they are non-negative integers. We can employ empirical methods to approximate the observation distribution within the class of TP2 distributions. This is not restrictive as several well-known distributions over non-negative integers satisfy this property~\cite{2002:AM-DS} such as binomial, Poisson, geometric distribution, etc. In the numerical section (Sec.\ref{sec:simulations}) involving a real Bitcoin dataset, we fit the data to the nearest TP2 distribution to the observation distribution.
        \item \label{assumption:c-small} The dynamics of $\compPowerRV_t$ are not affected by the miner's (decision maker) action $\action_t$ at time $t$. 
        
        \textit{Justification:} This assumption is realistic for an IoT device with small computing power compared to the total computing power (\cite{2015:WA}\cite{2021:WM-et-al} provides a comparative study of the typical computing power and energy used by an IoT device and a PC). This assumption is further justified because the IoT device's computing power is negligible to make a significant impact on the rate at which new blocks are added in the blockchain\footnote{\ref{assumption:c-small} allows us to deploy our model for multiple low-power IoT devices as long as the total computing power of the IoT devices is significantly small compared to the total computing power invested in the blockchain.}. 
    \end{enumerate}  

    \subsection{Structural results for the optimal mining policy}    
    \label{sec:main-results}
    \begin{figure}
        \centering
        \begin{subfigure}{0.25\textwidth}
        \scalebox{0.9}{
        \begin{tikzpicture}
            \tikzstyle{arrow} = [draw, -];
            \tikzstyle{arrowmark} = [draw, ->];
            \tikzstyle{arrow1} = [draw, -{Implies},double];
            \def\scale{4}
            \coordinate (e1) at (1*\scale,0);
            \coordinate (e2) at (0.5*\scale,0.866*\scale);
            \coordinate (e3) at (0,0);
            \node at (1.05*\scale,0) {$e_1$};
            \node at (0.5*\scale,0.906*\scale) {$e_2$};
            \node at (-0.2,0) {$e_3$};
            \draw (e1) -- (e2) -- (e3) -- (e1);
            \coordinate (pi1) at (0.5,0.866);
            \coordinate (pi2) at (0.5*2,0.866*2);
            \coordinate (pi3) at (0.5*3,0.866*3);
            \draw[dash dot] (e1) -- (pi1) node[font=\small, near end, above, sloped] {$\scrL(e_1,\pi_1)$};
            \draw[dash dot] (e1) -- (pi2) node[font=\small, near end, above, sloped] {$\scrL(e_1,\pi_2)$};
            \draw[dash dot] (e1) -- (pi3) node[font=\small, near end, above, sloped] {$\scrL(e_1,\pi_3)$};
            \node at (0.26,0.866) {$\bar{\pi_1}$};
            \node at (0.38*2,0.866*2) {$\bar{\pi_2}$};
            \node at (0.42*3,0.866*3) {$\bar{\pi_3}$};
            \draw[green!50!black] (0.6*\scale,0) -- (0.55*\scale,0.12*\scale);
            \draw[green!50!black] (0.55*\scale,0.12*\scale) -- (0.57*\scale,0.25*\scale);
            \draw[green!50!black] (0.57*\scale,0.25*\scale) -- (0.65*\scale,0.36*\scale);
            \draw[green!50!black] (0.65*\scale,0.36*\scale) -- (0.75*\scale,0.43*\scale);
            \node[green!50!black] at (0.84*\scale,0.5*\scale) {$\Tau^{l-1}$};
            \draw[blue] (0.7*\scale,0) -- (0.65*\scale,0.09*\scale);
            \draw[blue] (0.65*\scale,0.09*\scale) -- (0.66*\scale,0.2*\scale);
            \draw[blue] (0.66*\scale,0.2*\scale) -- (0.72*\scale,0.29*\scale);
            \draw[blue] (0.72*\scale,0.29*\scale) -- (0.8*\scale,0.35*\scale);            
            \node[blue] at (0.88*\scale,0.35*\scale) {$\Tau^{l}$};
        \end{tikzpicture}
        }
        \caption{General mining policy with a threshold structure}
        \label{fig:threshold-policy-general}
        \end{subfigure}
        \hspace{1cm}
        \begin{subfigure}{0.25\textwidth}
        \scalebox{0.9}{
        \begin{tikzpicture}
            \tikzstyle{arrow} = [draw, -];
            \tikzstyle{arrowmark} = [draw, ->];
            \tikzstyle{arrow1} = [draw, -{Implies},double];
            \def\scale{4}
            \coordinate (e1) at (1*\scale,0);
            \coordinate (e2) at (0.5*\scale,0.866*\scale);
            \coordinate (e3) at (0,0);
            \node at (1.05*\scale,0) {$e_1$};
            \node at (0.5*\scale,0.906*\scale) {$e_2$};
            \node at (-0.2,0) {$e_3$};
            \draw (e1) -- (e2) -- (e3) -- (e1);
            \coordinate (pi1) at (0.5,0.866);                
            \coordinate (pi2) at (0.5*2,0.866*2);
            \coordinate (pi3) at (0.5*3,0.866*3);
            \node at (0.26,0.866) {$\bar{\pi_1}$};
            \node at (0.38*2,0.866*2) {$\bar{\pi_2}$};
            \node at (0.42*3,0.866*3) {$\bar{\pi_3}$};
            \draw[dash dot] (e1) -- (pi1) node[font=\small, near end, above, sloped] {$\scrL(e_1,\pi_1)$};
            \draw[dash dot] (e1) -- (pi2) node[font=\small, near end, above, sloped] {$\scrL(e_1,\pi_2)$};
            \draw[dash dot] (e1) -- (pi3) node[font=\small, near end, above, sloped] {$\scrL(e_1,\pi_3)$};
            \draw[green!50!black] (0.6*\scale,0) -- (0.75*\scale,0.43*\scale);
            \node[green!50!black] at (0.84*\scale,0.5*\scale) {$\Tau^{l-1}$};
            \draw[blue] (0.7*\scale,0) -- (0.8*\scale,0.35*\scale);
            \node[blue] at (0.88*\scale,0.35*\scale) {$\Tau^{l}$};
        \end{tikzpicture}
        }
        \caption{Linear mining policy with a threshold structure}
        \label{fig:threshold-policy-linear}
        \end{subfigure}
        
        \caption{Visual illustration of Theorem~\ref{thm:monotonicity} and Theorem~\ref{thm:parameter-condition} for $\compPowerSetSize=3$. $\Tau^{l}$ denotes the threshold for deciding the $l^{th}$ mining instant. $\Tau^{l}$ partitions the belief space $\Pi$ into two connected regions $M^l$ (right of $\Tau^l$) and $D^l$ (left of $\Tau^l$). The linear policy in Sec.\ref{sec:linear-policy} approximates the threshold $\Tau^l$ using a linear hyperplane. The dashed lines $\scrL(e_1,\pi)$  in the figure are used in the proof to show the existence of a threshold policy. From a practical point of view, we exploit the structure to estimate the optimal linear mining policy using a stochastic gradient algorithm.}
        
        \label{fig:threshold-policy}
    \end{figure}
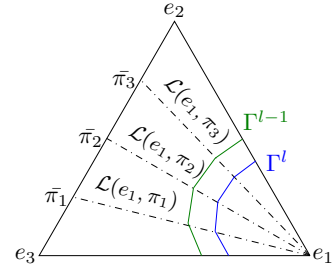
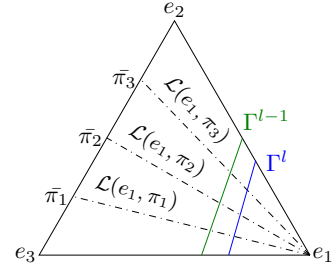
    Before discussing our structural results, we need to define the maximum likelihood ratio (MLR) ordering on the belief space $\Pi$ (see Sec.\ref{sec:belief-state}). The MLR ordering is preserved under conditional expectations~\cite{2016:VK}, making it suitable for Bayesian problems. The MLR ordering defines a partial order on a simplex, and we use it to show that the mining policy is monotone with respect to the belief state. 
    \begin{definition}[MLR ordering]
        \label{def:mlr}
        Let $\pi_1,\pi_2\in\Pi$ be two belief states. Then, $\pi_1$ is greater than $\pi_2$ with respect to MLR ordering, denoted as $\pi_1\geq_r\pi_2$, if $
            \pi_1(j)\pi_2(i)\geq\pi_2(j)\pi_1(i),\forall i<j$
    \end{definition}
    
    To understand the threshold property, let us define two families of sets: (1) \textit{mine set} $M^{\stopNumber},\;\stopNumber\in\stopNumberSet$ containing the belief states where it is optimal to mine, (2) \textit{don't mine set} $D^{\stopNumber},\;\stopNumber\in\stopNumberSet$ containing the belief states where it is optimal to not mine.
    \begin{align}
    \label{eq:space-partition}
        \begin{aligned}    
            D^\stopNumber=\{\compPowerDistribution:\policy^*(\compPowerDistribution,\stopNumber)=1\},\quad
            M^\stopNumber=\{\compPowerDistribution:\policy^*(\compPowerDistribution,\stopNumber)=2\}
        \end{aligned}
    \end{align} 
    Theorem~\ref{thm:monotonicity} shows the existence of an optimal mining policy that partitions the belief space $\compPowerDistributionSpace$ into two connected regions for each $\stopNumber\in\stopNumberSet$. Moreover, the family of sets $M^\stopNumber$ and $D^\stopNumber$ are nested. Theorem~\ref{thm:monotonicity} also shows the monotonicity of the optimal mining policy $\policy^*$ in the belief space $\compPowerDistributionSpace$. Fig.~\ref{fig:threshold-policy-general} shows a visual illustration of the Theorem~\ref{thm:monotonicity}.
    
    The belief space $\compPowerDistributionSpace$ consists of probability simplices. Hence, a total order can not be defined. To show the monotonicity of the optimal mining policy $\policy^*$, we define a family of total order subsets $\scrL(e_i,\bar{\compPowerDistribution})$ of the belief space $\compPowerDistributionSpace$. Here, $i\in\compPowerSetIndices$ indexes the elements in $\compPowerSet$ and $e_i$ denotes the $i^{th}$ standard basis vector in $\R^\compPowerSetSize$.
    \begin{align}
    \label{eq:totally-ordered-subsets}
    \begin{aligned}
        \scrH_i &\coloneqq \{\bar{\compPowerDistribution}\in\compPowerDistributionSpace,\;\bar{\compPowerDistribution}_i=0\}\\
        \scrL(e_i,\bar{\compPowerDistribution}) &\coloneqq \{\compPowerDistribution
        \mid\compPowerDistribution=\gamma e_i+(1-\gamma)\bar{\compPowerDistribution},\;\gamma\in[0,1]\},\;\bar{\compPowerDistribution}\in\scrH_i
    \end{aligned}
    \end{align}

    The set $\scrL(e_i,\bar{\compPowerDistribution}),i\in\compPowerSetIndices$ consists of line segments in the belief space $\compPowerDistributionSpace$; each element defines a totally ordered subset of the belief space $\compPowerDistributionSpace$ with respect to the monotone likelihood ratio (MLR) ordering (Definition~\ref{def:mlr}). 
In Theorem~\ref{thm:monotonicity}, we use the MLR order to show that the optimal mining strategy in blockchain is monotonically decreasing in the belief state on the lines $\scrL(e_1,\bar{\compPowerDistribution})$ and $\scrL(e_\compPowerSetSize,\bar{\compPowerDistribution})$.

    \begin{theorem}
    \label{thm:monotonicity}
    Under assumptions \ref{assumption:slow-mc}-\ref{assumption:c-small}, for each $\stopNumber\in\stopNumberSet$,
    \begin{enumerate}[label=\upshape \Alph*),ref=\thetheorem.\Alph*]
        \item\label{subthm:monotonicity} There exists an optimal policy $\policy^*(\compPowerDistribution,\stopNumber)$ that is decreasing on lines $\scrL(e_1, \bar{\compPowerDistribution})$, and $\scrL(e_S ,\bar{\compPowerDistribution})$ (defined in \eqref{eq:totally-ordered-subsets}).
        \item\label{subthm:threshold} There optimal policy $\policy^*$ partitions the belief space $\compPowerDistributionSpace$ into two individually connected sets $M^{\stopNumber}$ and $D^{\stopNumber}$ (defined in \eqref{eq:space-partition}).
        \item\label{subthm:nested} $M^{\stopNumber-1} \supset M^{\stopNumber}$
    \end{enumerate}
    \end{theorem}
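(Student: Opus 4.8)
The plan is to pass to the equivalent belief-state MDP and push monotonicity and submodularity through its dynamic-programming recursion, using the MLR order $\geq_r$ (Definition~\ref{def:mlr}) as the partial order. Indexing by the mining instant $\stopNumber$ (so that mining advances $\stopNumber\mapsto\stopNumber+1$ and continuing keeps $\stopNumber$ fixed), the Bellman recursion on $\compPowerDistributionSpace$ is
\begin{align*}
Q_\stopNumber(\compPowerDistribution,1) &= \discountFactor\sum_{\obsv} V_\stopNumber\big(T(\compPowerDistribution,\obsv)\big)\,\sigma(\compPowerDistribution,\obsv),\\
Q_\stopNumber(\compPowerDistribution,2) &= \reward(\compPowerDistribution,2)+\discountFactor\sum_{\obsv} V_{\stopNumber+1}\big(T(\compPowerDistribution,\obsv)\big)\,\sigma(\compPowerDistribution,\obsv),
\end{align*}
with $V_\stopNumber=\max\{Q_\stopNumber(\cdot,1),Q_\stopNumber(\cdot,2)\}$, terminal condition $V_{\numStops+1}\equiv 0$, and $\reward(\compPowerDistribution,2)=\sum_{\compPower}\compPowerDistribution(\compPower)\reward(\compPower,2)$ (recall $\reward(\compPower,1)=0$). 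Writing the advantage $\Delta_\stopNumber(\compPowerDistribution):=Q_\stopNumber(\compPowerDistribution,2)-Q_\stopNumber(\compPowerDistribution,1)$, we have $M^\stopNumber=\{\compPowerDistribution:\Delta_\stopNumber(\compPowerDistribution)\ge 0\}$, so all three parts reduce to controlling the sign and monotonicity of $\Delta_\stopNumber$ as $\compPowerDistribution$ moves along the totally ordered lines $\scrL(e_1,\bar{\compPowerDistribution})$, $\scrL(e_\compPowerSetSize,\bar{\compPowerDistribution})$ and as $\stopNumber$ varies.

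First I would assemble the MLR toolbox, all standard under \ref{assumption:slow-mc}--\ref{assumption:obsv-tp2}: (i) $\reward(\compPowerDistribution,2)$ is MLR-decreasing in $\compPowerDistribution$, since $\reward(\compPower,2)$ is decreasing in the state; (ii) the filter preserves the order, $\compPowerDistribution_1\geq_r\compPowerDistribution_2\Rightarrow T(\compPowerDistribution_1,\obsv)\geq_r T(\compPowerDistribution_2,\obsv)$, which uses that $\transMatrix$ is TP2; and (iii) $T(\compPowerDistribution,\obsv)$ is MLR-increasing in $\obsv$, which uses that $\obsvMatrix$ is TP2; I would cite~\cite{2016:VK} for (ii)--(iii). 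The crucial geometric observation is that each line $\scrL(e_i,\bar{\compPowerDistribution})$ is totally ordered by $\geq_r$, and that $e_1$ is the MLR-minimal vertex and $e_\compPowerSetSize$ the MLR-maximal vertex of $\compPowerDistributionSpace$; moreover every $\compPowerDistribution$ lies on the line $\scrL(e_1,\bar{\compPowerDistribution})$ with $\bar{\compPowerDistribution}=(\compPowerDistribution-\compPowerDistribution(1)e_1)/(1-\compPowerDistribution(1))\in\scrH_1$, the segment toward $e_1$ being MLR-decreasing. This is what makes "decreasing on $\scrL(e_1,\cdot)$" a well-posed statement even though $\geq_r$ is only a partial order.

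Next, by backward induction on $\stopNumber$ (base case $V_{\numStops+1}\equiv0$) I would show each $V_\stopNumber$ is MLR-decreasing and, crucially, that $\Delta_\stopNumber$ is MLR-decreasing on $\scrL(e_1,\cdot)$ and $\scrL(e_\compPowerSetSize,\cdot)$; this last is precisely the submodularity of $Q_\stopNumber$ in $(\compPowerDistribution,\action)$. The inductive engine is the lemma that $\compPowerDistribution\mapsto\sum_{\obsv}\phi(T(\compPowerDistribution,\obsv))\,\sigma(\compPowerDistribution,\obsv)$ is MLR-decreasing whenever $\phi$ is, which follows by combining (ii), (iii) with a stochastic-dominance argument on $\sigma(\compPowerDistribution,\cdot)$. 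Given the submodularity, along each totally ordered line the optimal action switches exactly once from $\action=2$ to $\action=1$ as $\compPowerDistribution$ increases, giving the threshold policy of part~\ref{subthm:monotonicity}. For connectedness (part~\ref{subthm:threshold}), I would note that if $\compPowerDistribution\in M^\stopNumber$ then along $\scrL(e_1,\cdot)$ every point MLR-below $\compPowerDistribution$ also lies in $M^\stopNumber$, so the whole segment from $\compPowerDistribution$ to $e_1$ is in $M^\stopNumber$; hence $M^\stopNumber$ is star-shaped about $e_1$ and therefore connected, and symmetrically $D^\stopNumber$ is star-shaped about $e_\compPowerSetSize$, matching Fig.~\ref{fig:threshold-policy-general}.

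For the nested structure (part~\ref{subthm:nested}), a short computation gives $\Delta_{\stopNumber-1}(\compPowerDistribution)-\Delta_\stopNumber(\compPowerDistribution)=\discountFactor\sum_{\obsv}\big(2V_\stopNumber-V_{\stopNumber-1}-V_{\stopNumber+1}\big)(T(\compPowerDistribution,\obsv))\,\sigma(\compPowerDistribution,\obsv)$, so that $\Delta_{\stopNumber-1}\ge\Delta_\stopNumber$ (hence $M^{\stopNumber-1}\supset M^{\stopNumber}$) follows once $V_\stopNumber$ is shown to be concave in $\stopNumber$, i.e. $2V_\stopNumber\ge V_{\stopNumber-1}+V_{\stopNumber+1}$. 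This is the diminishing-returns property of an additional mining opportunity, and I would establish it by a second induction propagating discrete concavity in $\stopNumber$ through the Bellman recursion. I expect the main obstacle to be the inductive step for the monotonicity/submodularity on the simplex: because $\geq_r$ is not a total order, the usual single-crossing lemmas do not apply verbatim, and one must verify both that the Bellman operator maps the cone of functions that are MLR-decreasing on $\scrL(e_1,\cdot)$ and $\scrL(e_\compPowerSetSize,\cdot)$ into itself and that the continuation integral preserves the decreasing advantage --- and it is exactly the TP2 assumptions \ref{assumption:slow-mc}--\ref{assumption:obsv-tp2} that force the filter to respect the order on these two pencils of lines.
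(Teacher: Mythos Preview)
Your plan tracks the paper's architecture --- submodularity of $Q_\stopNumber$ on MLR lines gives A), star-shapedness gives B), comparing advantages across $\stopNumber$ gives C) --- but two points need repair. First, because the continue action keeps $\stopNumber$ fixed, $Q_\stopNumber(\compPowerDistribution,1)=\discountFactor\sum_\obsv V_\stopNumber(T(\compPowerDistribution,\obsv))\sigma(\compPowerDistribution,\obsv)$ depends on $V_\stopNumber$ itself, so your backward induction on $\stopNumber$ is not a straight recursion: each $V_\stopNumber$ is the fixed point of an optimal-stopping operator given $V_{\stopNumber+1}$. The paper handles this by introducing a value-iteration index $k$ and carrying a double induction (forward in $k$, backward in $\stopNumber$) before passing to the limit; your write-up needs this extra layer.

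Second, and more substantively, your ``inductive engine'' --- propagating MLR-monotonicity of a single $\phi$ through $\compPowerDistribution\mapsto\sum_\obsv\phi(T(\compPowerDistribution,\obsv))\sigma(\compPowerDistribution,\obsv)$ --- does not by itself deliver MLR-monotonicity of $\Delta_\stopNumber$. With $W_\stopNumber=V_\stopNumber-V_{\stopNumber+1}$ one has $\Delta_\stopNumber(\compPowerDistribution)=\reward^\intercal\compPowerDistribution-\discountFactor\sum_\obsv W_\stopNumber(T(\compPowerDistribution,\obsv))\sigma(\compPowerDistribution,\obsv)$, so you would need $W_\stopNumber$ to be MLR-increasing, and that does \emph{not} follow from $V_\stopNumber$ and $V_{\stopNumber+1}$ being individually MLR-decreasing. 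The paper's remedy is a decomposition you have not written down: set $\overline W_k(\compPowerDistribution,\stopNumber):=W_k(\compPowerDistribution,\stopNumber)-\reward^\intercal\compPowerDistribution$, rewrite the advantage as $-\discountFactor\sum_\obsv \overline W_k(T(\compPowerDistribution,\obsv),\stopNumber)\sigma(\compPowerDistribution,\obsv)+\reward^\intercal(I-\discountFactor\transMatrix^\intercal)\compPowerDistribution$, check that the linear piece is MLR-decreasing (a second, independent use of TP2 of $\transMatrix$, beyond filter preservation), and prove $\overline W_k$ is MLR-increasing by a case analysis across the regions $M_k^{\stopNumber+1}$, $D_k^{\stopNumber+1}\cap M_k^\stopNumber$, $D_k^\stopNumber$. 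That case split already requires $M_k^\stopNumber\supset M_k^{\stopNumber+1}$, which is exactly your concavity-in-$\stopNumber$ claim (the paper's Lemma~\ref{lemma:w-increasing}), itself proved by the same type of case analysis; so your two inductions are not independent --- nestedness must be established first and then fed into the submodularity step.
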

    \begin{proof}
        See Sec.\ref{proof:thm-monotonicity} of supplementary material.
    \end{proof}
    Theorem~\ref{subthm:monotonicity} asserts that the optimal mining strategy $\policy^*(\compPowerDistribution,\stopNumber),\;\stopNumber\in\stopNumberSet$ is monotonically decreasing on lines $\scrL(e_1, \bar{\compPowerDistribution})$, and $\scrL(e_S ,\bar{\compPowerDistribution})$. This implies that there exists a threshold for the belief state $\pi$ above which it is optimal to mine in the blockchain. Theorem~\ref{subthm:threshold} shows that the threshold partitions the belief space into two connected sets. Theorem~\ref{subthm:nested} shows that the sets of belief state $\compPowerDistribution$, indexed by $\stopNumber\in\stopNumberSet$, such that $\policy^*(\compPowerDistribution,\stopNumber)=1$ are nested. 

    \subsection{Implications for Energy-Efficient Mining}
    \label{sec:implications}
    For modeling the energy-efficient mining problem, one needs to discretize the set of total computing power invested in the blockchain. This can lead to a large number of states for the POMDP formulation described in Sec.\ref{sec:model}. Therefore, the dynamic programming solution to POMDP is not practical for implementation on IoT devices. This is because the look-up table corresponding to the optimal mining policy grows exponentially with the number of states and requires search operations at each time instant. This is detrimental for IoT devices which have limited computational and energy resources. Theorem~\ref{thm:monotonicity} enables a less demanding approach to store the optimal mining policy in blockchain, both in terms of memory requirements and computational complexity. Memory requirement is reduced by solving a parametrized policy; this also reduces computational complexity as search operations are avoided. Moreover, under the assumption that the blockchain is time-invariant, IoT devices can be pre-programmed with the optimal linear mining policy before their deployment. This alternate solution approach will be discussed in Sec.\ref{sec:simulations}, where we describe our approach to compute an optimal linear policy for the energy-efficient mining problem in the blockchain.

    \subsection{Linear mining policy for a blockchain-enabled IoT device}
    \label{sec:linear-policy}
    This subsection focuses on the design of a linear mining policy for a blockchain-enabled IoT device which meets the structural results outlined in Theorem~\ref{thm:monotonicity}. Our main result is summarized in Theorem~\ref{thm:parameter-condition}, which characterizes the conditions on the parameters of the linear mining policy~\eqref{eq:linear-policy}.
    
    Consider a linear mining policy of the form
    \begin{align}
        \label{eq:linear-policy}
        \policy_\theta(\compPowerDistribution, \stopNumber)=\left\{\begin{array}{ll}
        2, & \begin{bmatrix}
        \theta_\stopNumber & 1 & 0
        \end{bmatrix}\left[\begin{smallmatrix}
        -1\\
        \compPowerDistribution
        \end{smallmatrix}\right] \geq 0 \\
        1, & \text { otherwise }
        \end{array}\right.
    \end{align}
    Here, $\theta_\stopNumber\in\R^{\numStops-1}$ is the parameter for the linear mining policy to decide $\stopNumber^{th}$ mining instant. $\theta$. We can restrict the search space for $\theta$ using structural results from Sec.\ref{sec:structural-results}. Theorem~\ref{thm:parameter-condition} enumerates necessary and sufficient conditions on the parameter $\theta$ so that the linear mining policy~\eqref{eq:linear-policy} satisfies Theorem~\ref{thm:monotonicity}. The conditions guarantee that all MLR-decreasing linear policies are included and no non-MLR-decreasing linear policies are excluded. Fig.~\ref{fig:threshold-policy-linear} shows a visual illustration of the linear policy~\eqref{eq:linear-policy} and Theorem~\ref{thm:parameter-condition}. 
    
    \begin{theorem}
    \label{thm:parameter-condition}
    Assuming the set $M^{\stopNumber}$ is non-empty, the necessary and sufficient conditions for the linear policy~\eqref{eq:linear-policy} to satisfy the structural results in Theorem~\ref{thm:monotonicity} are: 1) $\theta_\stopNumber(i)\geq 0,\;\forall i, \forall \stopNumber$, 2) $\theta_\stopNumber(2)\geq 1,\;\forall \stopNumber$ and $\theta_\stopNumber(i)\leq \theta_\stopNumber(2),\;\forall i>2, \forall \stopNumber$, 3) $\theta_{\stopNumber}(1) \leq \theta_{\stopNumber+1}(1)$ and $\theta_{\stopNumber}(i) \geq \theta_{\stopNumber+1}(i),\;\forall i>1, \forall \stopNumber$
    \end{theorem}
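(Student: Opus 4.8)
The plan is to use the fact that the linear policy \eqref{eq:linear-policy} sets $\policy_\theta(\compPowerDistribution,\stopNumber)=2$ exactly on the half-space $M^\stopNumber=\{\compPowerDistribution\in\compPowerDistributionSpace: g_\stopNumber(\compPowerDistribution)\ge 0\}$, where $g_\stopNumber$ is the affine functional on the left-hand side of \eqref{eq:linear-policy}. Writing $g_\stopNumber(\compPowerDistribution)=\sum_{j}a_\stopNumber(j)\compPowerDistribution(j)-b_\stopNumber$ with $b_\stopNumber=\theta_\stopNumber(1)$, $a_\stopNumber(j)=\theta_\stopNumber(j+1)$ for $j\le\compPowerSetSize-2$, and the normalized coefficients $a_\stopNumber(\compPowerSetSize-1)=1$, $a_\stopNumber(\compPowerSetSize)=0$, I would match each of the three parts of Theorem~\ref{thm:monotonicity} to an explicit system of inequalities on the $a_\stopNumber(j)$ and $b_\stopNumber$, and hence on $\theta_\stopNumber$. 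Since the statement is an equivalence, each implication must be argued in both directions. The connectedness requirement of Theorem~\ref{subthm:threshold} needs no work: $M^\stopNumber$ is a half-space intersected with the simplex $\compPowerDistributionSpace$, hence convex and connected, and $D^\stopNumber=\{g_\stopNumber<0\}\cap\compPowerDistributionSpace$ is likewise convex, so a linear policy automatically yields two connected regions and imposes no constraint on $\theta$.

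The core step is to translate the MLR-monotonicity of Theorem~\ref{subthm:monotonicity} on the lines $\scrL(e_1,\bar{\compPowerDistribution})$ and $\scrL(e_\compPowerSetSize,\bar{\compPowerDistribution})$ of \eqref{eq:totally-ordered-subsets} into conditions (1)--(2). I would use that $e_1$ and $e_\compPowerSetSize$ are the MLR-minimal and MLR-maximal points of $\compPowerDistributionSpace$, and that along any such line $g_\stopNumber(\gamma e_i+(1-\gamma)\bar{\compPowerDistribution})$ is affine in $\gamma$, so it crosses zero at most once; the threshold on each line is therefore automatic and only its \emph{orientation} must be fixed. Requiring the policy to be decreasing (mine near $e_1$, do not mine near $e_\compPowerSetSize$) forces $g_\stopNumber(e_1)\ge g_\stopNumber(\bar{\compPowerDistribution})$ for all $\bar{\compPowerDistribution}\in\scrH_1$ and $g_\stopNumber(e_\compPowerSetSize)\le g_\stopNumber(\bar{\compPowerDistribution})$ for all $\bar{\compPowerDistribution}\in\scrH_\compPowerSetSize$. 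Because $g_\stopNumber$ is affine and each $\scrH_i$ is a face of the simplex, its extrema over $\scrH_i$ are attained at vertices, so these reduce to $a_\stopNumber(1)\ge a_\stopNumber(j)$ for all $j$ (equivalently $\theta_\stopNumber(2)\ge 1$ and $\theta_\stopNumber(i)\le\theta_\stopNumber(2)$ for $i>2$, i.e.\ condition (2)) and to $a_\stopNumber(j)\ge 0$ for $j<\compPowerSetSize$. Adjoining the requirement that $e_\compPowerSetSize\in D^\stopNumber$, so that the partition into two non-empty connected regions of \eqref{eq:space-partition} is genuine, gives $g_\stopNumber(e_\compPowerSetSize)\le 0$, i.e.\ $b_\stopNumber=\theta_\stopNumber(1)\ge 0$; together with $a_\stopNumber(j)\ge 0$ this is condition (1). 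The assumed non-emptiness of $M^\stopNumber$ then guarantees $e_1\in M^\stopNumber$ automatically and rules out the reversed orientation, which is what makes the converse implications hold.

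For the nesting $M^{\stopNumber-1}\supset M^\stopNumber$ of Theorem~\ref{subthm:nested}, I would show it is equivalent to $g_\stopNumber(\compPowerDistribution)\le g_{\stopNumber-1}(\compPowerDistribution)$ for all $\compPowerDistribution\in\compPowerDistributionSpace$. Sufficiency is immediate, and necessity follows because two linear thresholds of the \emph{same} decreasing MLR-orientation with non-empty mine sets are nested only if their crossing points are ordered on every line $\scrL(e_1,\bar{\compPowerDistribution})$, which forces the separating hyperplanes to be ordered across $\compPowerDistributionSpace$. As $g_{\stopNumber-1}-g_\stopNumber$ is affine and the two normalized coefficients cancel, its sign over the simplex is governed by its vertex values, yielding $\theta_{\stopNumber-1}(1)\le\theta_\stopNumber(1)$ and $\theta_{\stopNumber-1}(i)\ge\theta_\stopNumber(i)$ for $i>1$, which is exactly condition (3). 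Collecting the three reductions establishes the claimed equivalence.

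I expect the main obstacle to be the necessity direction together with the careful handling of orientation and boundary cases. Sufficiency of (1)--(3) is a direct vertex computation, but showing that \emph{no} MLR-decreasing linear policy is excluded and \emph{no} non-monotone one is admitted requires arguing that the vertex/face extremizations are tight, and that non-emptiness of $M^\stopNumber$ genuinely pins down the decreasing orientation rather than its reverse. The nesting necessity is the most delicate point, since for arbitrary half-spaces $H_1\cap\compPowerDistributionSpace\subseteq H_2\cap\compPowerDistributionSpace$ does not imply a pointwise ordering of the defining functionals; it is the shared monotone threshold structure that must be invoked to obtain it. Finally, the weak inequality $g_\stopNumber\ge 0$ (which places the boundary hyperplane in $M^\stopNumber$) must be tracked so that the equivalence is exact rather than merely up to the boundary.
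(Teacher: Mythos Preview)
Your proposal is correct and follows essentially the same route as the paper's proof: both reduce the MLR-monotonicity requirement on the lines $\scrL(e_1,\bar{\compPowerDistribution})$ and $\scrL(e_{\compPowerSetSize},\bar{\compPowerDistribution})$ to sign conditions on the affine functional $g_\stopNumber$ by evaluating at vertices of the face $\scrH_i$, and both obtain condition~(3) from the pointwise ordering $g_\stopNumber\le g_{\stopNumber-1}$ that encodes the nesting. Your treatment is in fact somewhat more complete than the paper's terse derivation: you explicitly note that Theorem~\ref{subthm:threshold} is automatic for half-space policies, you separate the sufficiency and necessity directions, and you flag the subtlety that half-space containment on $\compPowerDistributionSpace$ does not in general force a pointwise ordering of the defining functionals without the shared monotone orientation---a point the paper's proof simply asserts via the displayed equivalence.
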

    \begin{proof}
        See Sec.\ref{proof:thm-parameter-condition} of supplementary material.
    \end{proof}
    
    \subsection{Parameters of the linear mining policy in spherical coordinates}
    \label{sec:spherical-coordinates}
    The optimization of the linear mining policy~\eqref{eq:linear-policy} subject to conditions in Theorem~\ref{thm:parameter-condition} can be formulated as a constrained optimization problem. In this subsection, we present a transformation of the policy parameters $\theta$ into spherical coordinates. This transformation will be exploited in Sec.\ref{sec:spsa} to formulate the optimization of the policy parameters as an unconstrained optimization problem.
    
    The parameter $\theta$ in \eqref{eq:linear-policy} has to satisfy the conditions described in Theorem~\ref{thm:parameter-condition} so as to satisfy the structural results in Theorem~\ref{thm:monotonicity}. We now define a relation between parameter $\theta\in\R^{\numStops-1}$ in Euclidean coordinates and parameter $\phi\in\R^{\numStops-1}$ in spherical coordinates:
    \begin{align}
    \label{eq:spherical-coordinates}
    \theta_\stopNumber^\phi(i)=\left\{\begin{array}{ll}
    \phi_1^2(1) \prod_{j=l}^{L-1} \sin ^2\left(\phi_{j}(1)\right),& i=1 \\
    1+\phi_1^2(2) \prod_{j=2}^{l} \sin ^2\left(\phi_{j}(2)\right), & i=2 \\
    \theta_l(2) \prod_{j=1}^L \sin ^2\left(\phi_{j}(i)\right),& i>2 .
    \end{array}\right.
    \end{align}
    It can be easily verified that the $\theta$ obtained using~\eqref{eq:spherical-coordinates} satisfies the conditions in Theorem~\ref{thm:parameter-condition}. So, instead of optimizing the parameter $\theta$ using a constrained optimization problem, we can optimize the parameter $\phi$ as an unconstrained optimization problem. 
            
    \subsection{Policy gradient reinforcement learning algorithm \cite{2005:JS}}    
    \label{sec:spsa}
    In this subsection, we describe the policy gradient algorithm to optimize the parameters of the linear mining policy~\eqref{eq:linear-policy} for a blockchain-enabled IoT device. As it is difficult to obtain a closed-form expression for the cumulative reward as a function of the mining policy, we utilize techniques from stochastic optimization to optimize the policy parameters. 
    \begin{algorithm}        
     \caption{Policy gradient algorithm}
     \begin{algorithmic}[1]
     \REQUIRE $\compPowerDistribution_0,\compPowerSet,\obsvSet,\transMatrix,\obsvMatrix,\reward,\discountFactor,\numStops,\epsilon,\zeta,\kappa,\nu,\psi$
      \STATE Initialize $\phi^{(0)}$ randomly.
      \FOR {$n = 1$ to $N$}
      \STATE Compute $\theta^{\phi^{(n)}}$ using~\eqref{eq:spherical-coordinates} and compute $a_n, c_n$ using~\eqref{eq:spsa-parameters}.
      \STATE Use Algorithm~\ref{algo:pomdp-simulator} to simulate the POMDP for the energy-efficient mining problem in Sec.\ref{sec:model} using the linear mining policy~\eqref{eq:linear-policy} with policy parameters $\theta^{\phi^{(n)}}+c_n\omega_n$ and $\theta^{\phi^{(n)}}+c_n\omega_n$. Update the parameter $\phi^{(n)}$ using~\eqref{eq:spsa-algorithm}.
      \ENDFOR
     \RETURN $\theta^{\phi^{(N)}}$ 
     \end{algorithmic} 
     \label{algo:policy-gradient}
     \end{algorithm}
    
    The policy gradient algorithm to optimize the parameter $\theta$ in spherical coordinates is as follows: (1) initialize the parameter $\phi$, (2) update $\phi$ using~\eqref{eq:spsa-algorithm}. Algorithm~\ref{algo:policy-gradient} summarizes the steps in the policy gradient algorithm. 
    \begin{align}
    \label{eq:spsa-algorithm}
    \begin{aligned}
        \hat{\nabla}_{\phi} \totalReward\left(\theta^{\phi^{(n)}}\right)&=\frac{\totalReward\left(\theta^{\phi^{(n)}}+c_n \omega_n\right)-\totalReward\left(\theta^{\phi^{(n)}}-c_n \omega_n\right)}{2 c_n} \omega_n\\
        \phi^{(n+1)}&=\phi^{(n)}+a_n \hat{\nabla}_{\phi}J\left(\theta^{\phi^{(n)}}\right)
    \end{aligned}
    \end{align}
    Here, $\phi^{(n)}$ is the value of parameter $\phi$ at $n^{th}$ iteration. $\theta^{\phi^{(n)}}$ is the value of the parameter in Euclidean coordinates obtained using~\eqref{eq:spherical-coordinates}.
    The parameters $a_n$ and $c_n$ are typically chosen as:
    \begin{align}
    \label{eq:spsa-parameters}
        \begin{array}{llll}
        a_n=\varepsilon(n+1+\varsigma)^{-\kappa}, \quad c_n=\psi(n+1)^{-v},\\ 
        0.5<\kappa \leq 1,\varepsilon,\quad \varsigma>0, 0.5<v \leq 1,\quad \psi>0\\
        \end{array}
    \end{align}
    
    To summarize, we showed that the optimal mining policy has a threshold structure (Theorem~\ref{thm:monotonicity}), and it partitions the belief space into two connected sets. We exploited these results to design a linear mining policy~\eqref{eq:linear-policy} for the energy-efficient mining problem in the blockchain. We specified conditions on the parameters of the linear mining policy (Theorem~\ref{thm:parameter-condition}) so that it satisfies the structural results in Theorem~\ref{thm:monotonicity}. This was followed by the transformation of the parameters of the linear mining policy to spherical coordinates~\eqref{eq:spherical-coordinates}. The latter facilitated us to formulate the optimization of the policy parameters as an unconstrained optimization problem. We also presented a policy gradient algorithm~\eqref{eq:spsa-algorithm} to optimize the linear policy's parameters in the spherical coordinates. 
	
    \section{Numerical results and bitcoin dataset}
    \label{sec:simulations}
    In this section, we compute an optimal linear mining policy~\eqref{eq:linear-policy} for the energy-efficient mining problem in blockchain\footnote{All the numerical results are reproducible, and the codes are available on GitHub at https://github.com/anuraggin/blockchain_pomdp.git}  using synthetic (Sec.\ref{sec:simulation-synthetic-low} and Sec.\ref{sec:simulation-synthetic-high}) and a real bitcoin dataset (Sec.\ref{sec:simulation-real}). The optimal linear mining policy~\eqref{eq:linear-policy} satisfies the structural results in Theorem~\ref{thm:monotonicity} and is suitable for IoT devices: the linear policy uses less memory, less computation and can be computed offline. To illustrate the performance of our proposed optimal linear policy~\ref{pol:linear}, we compare it with four other mining strategies: 
    \begin{enumerate}[leftmargin=*, label=(P\arabic*)]
        \item \label{pol:optimal} \textbf{Optimal mining policy}: a mining policy obtained using the value iteration algorithm for the multiple stopping time problem. This qualifies as the ground truth since it is the optimal solution~\cite{2016:VK}.
        \item \label{pol:linear} \textbf{Optimal linear mining policy}: a linear mining policy~\eqref{eq:linear-policy} obtained using the policy gradient algorithm~\eqref{eq:spsa-algorithm}.
        \item \label{pol:first-L}\textbf{First $\numStops$ mining}: a policy that chooses the first $\numStops$ time instants for mining.
        \item \label{pol:random} \textbf{Random policy}: a policy that decides to mine or not at each time instant by tossing a biased coin. For a random policy, the probability of heads for a biased coin can be adjusted based on the rate of data sensing. In our simulation, we consider a fair coin.
        \item \label{pol:rl}\textbf{Reinforcement learning based mining policy}: Reinforcement learning has been exploited in the literature to compute optimal mining policy under various settings~\cite{2021:TW-et-al}\cite{2023:JS-et-al}; it uses softmax parametrization to model policy. In the reinforcement learning paradigm, the parameters of the MDP are unknown to the decision maker. Hence, the policy is designed as a function of the current and past observations.
            $\operatorname{Pr}(\mu(\pi, l)=u)=\frac{\exp \left( \theta_{l, u}^\intercal W_t\right)}{\sum_{u=1}^2 \exp \left( \theta_{l, u}^\intercal W_t\right)}$
        Here, $W_t :=\left[y_t\quad y_{t-1}\quad\ldots\quad y_{t-N+1}\right]^\intercal$ is the observation window,$\theta_{\stopNumber,\action}\in\R^{N},\forall \stopNumber,\forall\action$ is the policy parameters; $N$ denotes the size of observation window for designing a mining policy using the softmax parametrization. For simulation, we chose $N=2$ so that the number of parameters in the softmax parametrization is similar to that of the linear mining policy.
    \end{enumerate}

    \subsection{Low-dimensional numerical examples using synthetic data}
    \label{sec:simulation-synthetic-low}
    \begin{table}
        \centering
        \caption{Model parameters for the energy-efficient mining problem in blockchain for an IoT device (Low-dimensional synthetic data)}
        \begin{tabular}{c|c|l}
        \hline
            \textbf{Parameters} & \textbf{Eq.} & \textbf{Values}\\
            \hline
            $\{\compPowerDistribution_0,\compPowerSet\}$ & \eqref{eq:trans-matrix} & $\left\{[0\; 0\; 1],\{1,2,3\}\right\}$\\
            $\transMatrix$ & \eqref{eq:trans-matrix} & $\left[\begin{array}{ccc}0.5&0.5&0\\0.25&0.5&0.25\\0&0.5&0.5\end{array}\right]$\\
            $\obsvSet$ & \eqref{eq:obsv-matrix} & $\{1,2,3,4,5\}$\\
            $\obsvMatrix^\intercal$ & \eqref{eq:obsv-matrix} & $\left[\begin{array}
            {ccccc} 0.2384 & 0.1686 & 0.0221\\0.3129 & 0.2580 & 0.0955\\ 0.3951 & 0.3258 & 0.1207\\ 0.0629 & 0.3 & 0.4546\\0.0044 & 0.0669 & 0.1741\end{array}\right]$\\
            $[r(1,2),r(2,2),r(3,2)]$ & \eqref{eq:reward} &[0.1, 0.01, 0.001]\\$[\discountFactor, \numStops]$ & \eqref{eq:cumulative-reward} & $[0.9,3]$\\$[\epsilon, \zeta, \kappa, \nu, \psi]$ & \eqref{eq:spsa-parameters} & $[0.7,0.1,0.6,0.6,0.1]$\\
            \hline
        \end{tabular}        
        \label{tab:model-parameters-simple}
    \end{table}
    
    \begin{figure}
        \centering
        \begin{subfigure}{0.45\textwidth}
        \includegraphics[width=0.9\textwidth]{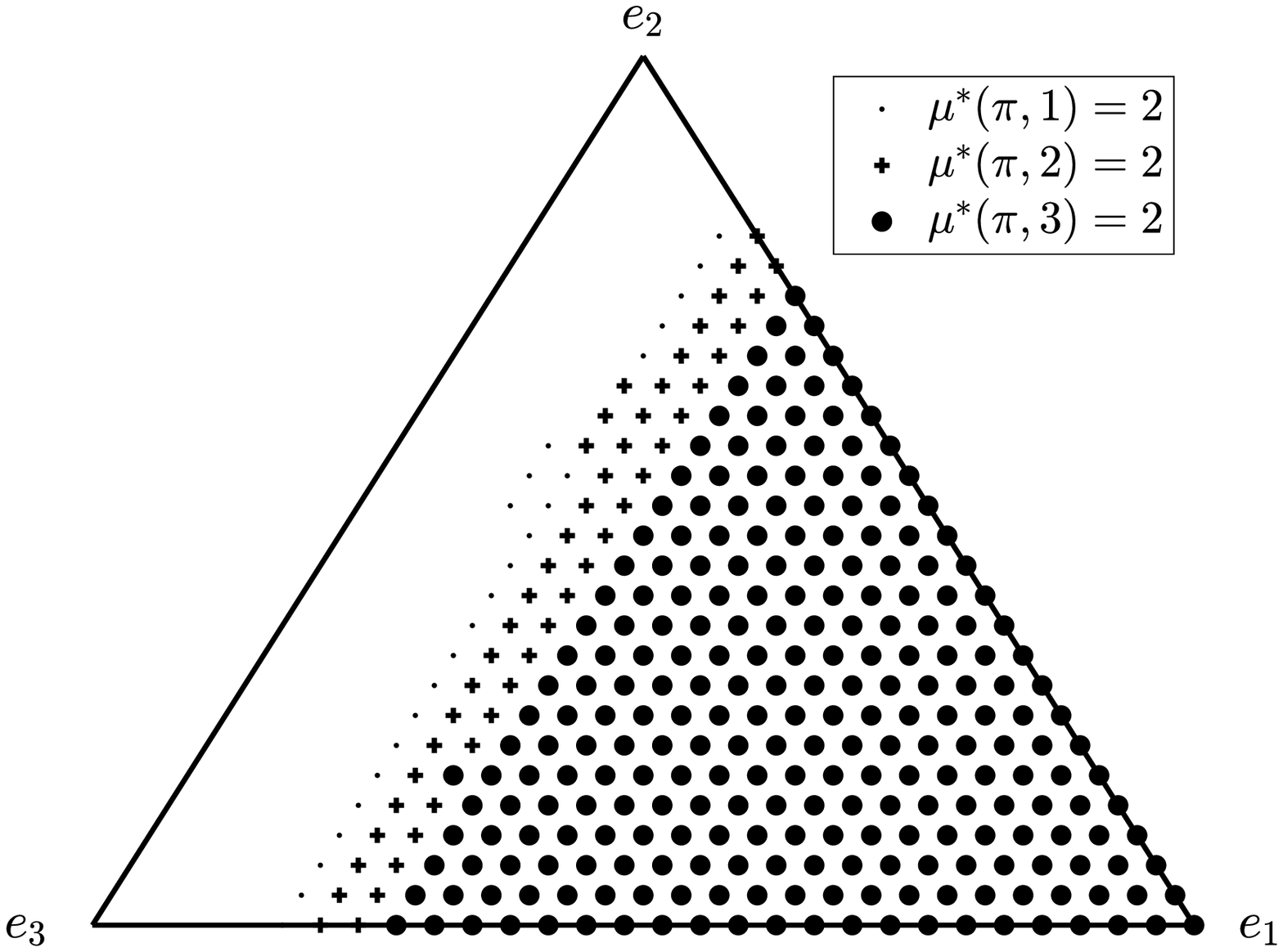}
        \caption{Value iteration algorithm.}
        \label{fig:value-iteration}
        \end{subfigure}
        \begin{subfigure}{0.45\textwidth}
            \includegraphics[width=0.9\textwidth]{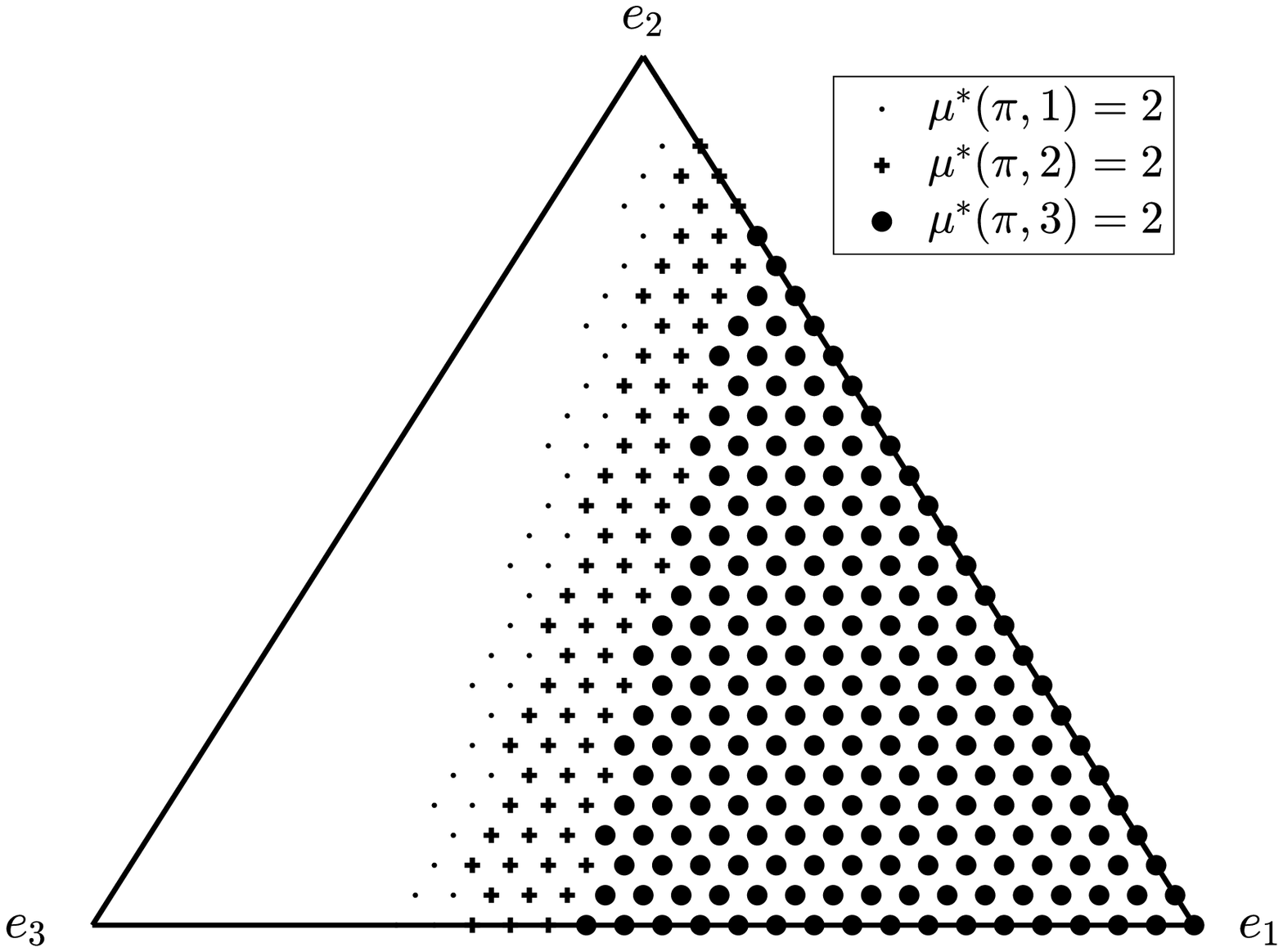}
            \caption{Policy gradient algorithm.}
            \label{fig:linear-policy}
        \end{subfigure}
        \caption{Optimal mining policy for the energy-efficient mining problem in blockchain on a low-dimensional synthetic data. The triangle represents the belief space for the energy-efficient mining problem in blockchain. The belief space has been discretized into 30 equal parts along each axis. The markers indicate the belief states where the optimal action is to mine in the blockchain. The optimal mining policy has a threshold and nested structure (Theorem~\ref{thm:monotonicity}).}
    \end{figure}
    
    \begin{figure*}[t]
        \centering
        \begin{subfigure}{0.45\textwidth}
        \includegraphics[width=0.9\textwidth]{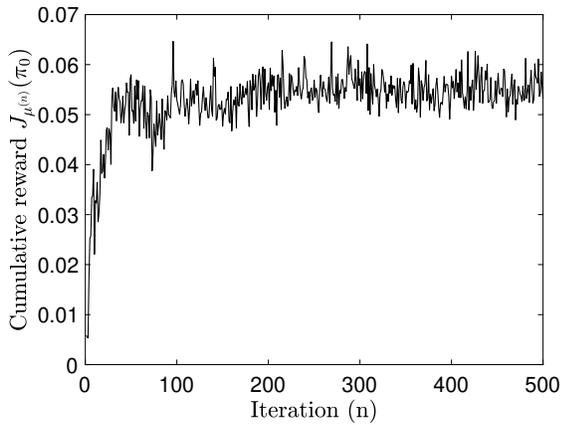}
        \caption{Low-dimensional synthetic data (Sec.\ref{sec:simulation-synthetic-low})}
        \label{fig:spsa-convergence}
        \end{subfigure}
        \begin{subfigure}{0.45\textwidth}
            \includegraphics[width=0.9\textwidth]{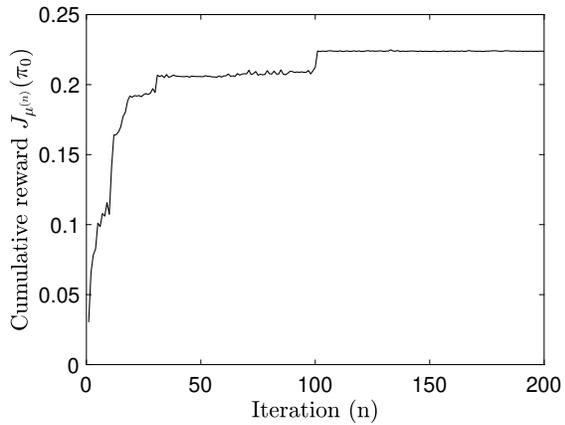}
            \caption{High-dimensional synthetic data (Sec.\ref{sec:simulation-synthetic-high}).}
            \label{fig:spsa-convergence-large}
        \end{subfigure}
        \caption{Cumulative reward for the energy-efficient mining problem at each iteration of the policy gradient algorithm. $\policy^{(n)}$ denotes the linear mining policy $\policy$ at iteration $n$. The policy gradient algorithm converges within 100 iterations. Therefore, if the total miners in an IoT application evolve slowly with time, the IoT device can also use the policy gradient algorithm to update its optimal mining policy.}
    \end{figure*}

    \begin{table}
    \centering
    \caption{Comparison of the optimal mining policy with other heuristic mining strategies}
    \begin{subtable}{0.45\textwidth}
        \centering
        \caption{Low-dimensional synthetic dataset}
        \begin{tabular}{l|c}
            \hline
            \textbf{Policy} & \textbf{Reward~\eqref{eq:cumulative-reward}}\\
            \hline
            \ref{pol:optimal} Optimal mining policy &  0.0579\\
            \ref{pol:linear} Optimal linear mining policy &  0.0549\\
            \ref{pol:first-L} First $\numStops$ mining & 0.0204\\
            \ref{pol:random} Random policy & 0.0348\\
            \ref{pol:rl} Reinforcement learning & 0.0452\\
            \hline
        \end{tabular}
        \label{tab:different-policy}
    \end{subtable}\\\vspace{0.2cm}
    \begin{subtable}{0.45\textwidth}
        \centering
        \caption{Real Bitcoin dataset}
        \begin{tabular}{l|c}
            \hline
            \textbf{Policy} & \textbf{Reward~\eqref{eq:cumulative-reward}}\\
            \hline
            \ref{pol:optimal} Optimal mining policy &  0.2021\\
            \ref{pol:linear} Optimal linear mining policy &  0.1991\\
            \ref{pol:first-L} First $\numStops$ mining & 0.1265\\
            \ref{pol:random} Random policy & 0.1432\\
            \ref{pol:rl} Reinforcement learning & 0.1621\\
            \hline
        \end{tabular}
        \label{tab:different-policy-real}
    \end{subtable}
    \end{table}
    
    In this subsection, we use synthetic model parameters for the proposed energy-efficient mining problem in blockchain (Sec.\ref{sec:model}). Our model parameters are summarized in Table~\ref{tab:model-parameters-simple}. We chose $\compPowerSetSize=3$ to visualize the structure of the optimal mining policy. We solved the optimal mining policy for the energy-efficient mining problem in blockchain using the value iteration algorithm (Fig.~\ref{fig:value-iteration}) and the optimal linear mining policy using the policy gradient algorithm (Fig.~\ref{fig:linear-policy}). Fig.~\ref{fig:spsa-convergence} shows the convergence of the policy gradient algorithm.
    The optimal mining policy gives an expected reward of 0.0579 whereas the optimal linear mining policy gives an expected reward of 0.0549. Therefore, there is a 5.5\% loss in the expected reward for using a linear mining policy. On the positive side, the optimal linear mining policy uses much less memory and requires less computation compared to the solution of the value iteration algorithm. This is because the solution of value iteration corresponds to storing a look-up table, the size of which grows exponentially with the size of state space. Moreover, for the optimal mining policy, the IoT device has to perform a search operation on the look-up table at each time instant to obtain the optimal action. The optimal linear mining policy overcomes these two drawbacks making it suitable for resource-constrained IoT devices.
    
    We also compared our proposed mining policy with other heuristic mining strategies. The results are summarized in Table~\ref{tab:different-policy}. We observe that the optimal linear mining policy provides a significant improvement over other heuristic policies: $69\%$ improvement over the first $\numStops$ mining policy, $58\%$ improvement over the random policy, and $22\%$ improvement over reinforcement learning-based mining policy.

    \subsection{High-dimensional numerical example using synthetic data}
    \label{sec:simulation-synthetic-high}
    In this subsection, we solve a higher dimensional energy-efficient mining problem in blockchain (Sec.\ref{sec:model}) using synthetic data. The model parameters are summarized in Table~\ref{tab:model-parameters-large}.
    Fig.\ref{fig:spsa-convergence-large} shows the convergence of the policy gradient algorithm. Even for high dimensional data, the policy gradient algorithm converges within 200 iterations using a suitable choice of parameters for the policy gradient algorithm. Therefore, if the total number of miners in an IoT application evolves slowly with time, the IoT device can also use the policy gradient algorithm to update its optimal linear mining policy.
    \begin{table}
        \centering
        \caption{Model parameters for the energy-efficient mining problem in blockchain for an IoT device (High-dimensional synthetic data)}
        \begin{tabular}{c|c|l}
        \hline
            \textbf{Parameters} & \textbf{Eq.} & \textbf{Values}\\
            \hline
            $\{\compPowerDistribution_0,\compPowerSet\}$ & \eqref{eq:trans-matrix} & $\left\{[0\; 0 \;\ldots \;0\; 1],\{1,2,\ldots,10\}\right\}$\\
            $\transMatrix$ & \eqref{eq:trans-matrix} & $\transMatrix_{i,i}= 0.5, \;\forall i,\;\transMatrix_{1,2}=\transMatrix_{10,9}=0.5$,\\&&$\transMatrix_{i,j-1}=\transMatrix_{i,j+1}=0.25,\forall i\neq\{1,10\}$\\
            $\{\obsvSet,\obsvMatrix(i,\obsv)\}$ & \eqref{eq:obsv-matrix} & $\left\{\{1,2,\ldots,12\},\frac{(10\,i)^\obsv\exp(-10\,i)/\obsv!}{\sum_\obsv(10i)^\obsv\exp(-10i)/\obsv!}\right\}$\\
            $\reward(\compPowerRV,2)$ & \eqref{eq:reward} & $1/\compPowerRV^3$\\$[\discountFactor, \numStops]$ & \eqref{eq:cumulative-reward} & $[0.9,3]$\\$[\epsilon, \zeta, \kappa, \nu, \psi]$ & \eqref{eq:spsa-parameters} & $[0.7,0.1,0.6,0.6,0.1]$\\
            \hline
        \end{tabular}        
        \label{tab:model-parameters-large}
    \end{table}
    \subsection{Numerical examples using real bitcoin dataset}
    \label{sec:simulation-real}
    \begin{figure*}[t]        
        \centering
        \begin{subfigure}{0.45\textwidth}
        \includegraphics[width=0.9\textwidth]{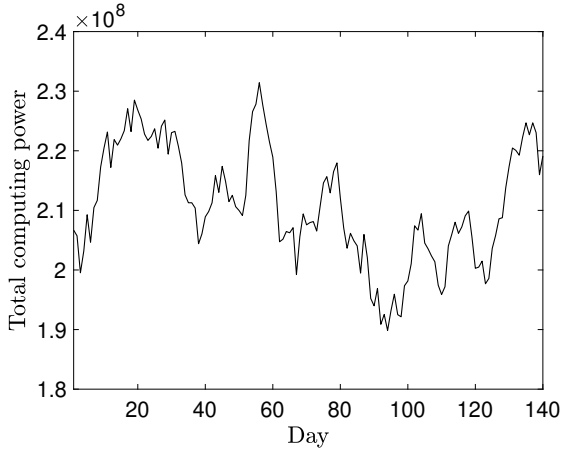}
        \caption{Total computing power vs. time. The total computing power was estimated using the rate of new blocks in the blockchain over 24 hours interval.}
        \label{fig:data_total_computing}
        \end{subfigure}\hspace{0.5cm}
        \begin{subfigure}{0.45\textwidth}
        \includegraphics[width=0.9\textwidth]{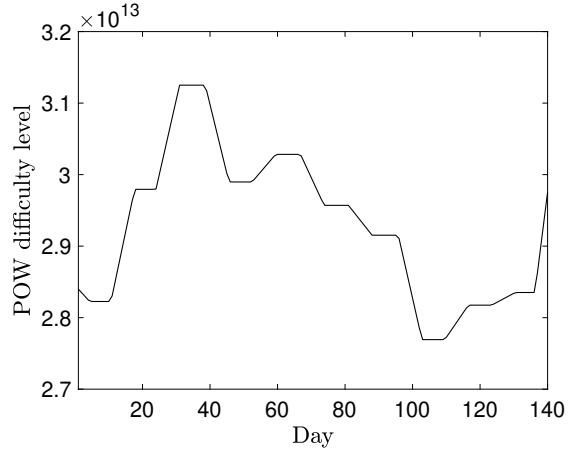}
        \caption{PoW difficulty level vs. time. The PoW difficulty level is the average of the PoW difficulty level computed over 24-hour interval.}
        \label{fig:data_pow}
        \end{subfigure}
        \caption{Bitcoin mining dataset between April 2022 - August 2022 (Source: \cite{2023:blockchain})}.
        \label{fig:data}
    \end{figure*}
    Now, we use real Bitcoin data to estimate the model parameters for the energy-efficient mining problem in blockchain (Sec.\ref{sec:model}). The estimated model parameters are used to solve the optimal mining policy for a blockchain-enabled IoT device. 
    
    A record of historical data on bitcoin mining is available in~\cite{2023:blockchain}. We use their data on the estimated hash rate (total computing power) and the difficulty (PoW difficulty level) to estimate the transition matrix and the observation matrix. The dataset contains total computing power and the PoW difficulty sampled once per day. As the total number of miners in the Bitcoin network is growing with time, we use a small range of data to compute the model parameters.
    Fig.~\ref{fig:data} shows the plot of the Bitcoin dataset between April 2022 - August 2022.
    To compute the model parameters, we first group the data into bins of uniform size as follows: (1) total computing power is grouped into three bins, i.e., $\compPowerSetSize=3$, (2) PoW difficulty level is grouped into five bins, i.e., $\obsvSetSize=5$. We used the binned data to estimate the transition matrix $\transMatrix$ 
    using the MLE estimator (see Table~\ref{tab:model-parameters-real}).
    The estimated transition matrix $\transMatrix$ satisfies the tri-diagonal structure assumption, and the diagonal terms are dominant, thereby satisfying the assumption \ref{assumption:slow-mc}. Hence, \ref{assumption:slow-mc} is easy to satisfy with a suitable choice of binning and sampling interval. However, in our case, the estimated observation matrix using the MLE did not yield a TP2 matrix. This could be due to external factors or the insufficient size of the dataset.
    Hence, to exploit our structural results, we estimate the observation matrix $\hat{\obsvMatrix}$ within the class of TP2 distribution\footnote{Although it is beyond the scope of this study, it would be worth studying the loss in optimality due to estimation of the model parameters within the class of TP2 distribution.}. The model parameters are summarized in Table~\ref{tab:model-parameters-real}.
    \begin{table}
        \centering
        \caption{Model parameters for the energy-efficient mining problem in blockchain for an IoT device (Real Bitcoin dataset)}
        \begin{tabular}{c|c|l}
        \hline
            \textbf{Parameters} & \textbf{Eq.} & \textbf{Values}\\
            \hline
            $\{\compPowerDistribution_0, \compPowerSet\}$, & \eqref{eq:trans-matrix} & $\left\{[0\; 0\; 1],\{1,2,3\}\right\}$\\
            $\transMatrix$ & \eqref{eq:trans-matrix} & $\left[\begin{array}{ccc}0.8&0.2&0\\0.038&0.8861&0.0759\\0&0.1111&0.8889\end{array}\right]$\\
            $\obsvSet$ & \eqref{eq:obsv-matrix} & $\{1,2,3,4,5\}$\\
            $\obsvMatrix^\intercal$ & \eqref{eq:obsv-matrix} & $\left[\begin{array}{ccc}0.2384&0.1686&0.0221\\0.3129&0.258&0.0955\\0.3951&0.3258&0.1207\\0.0629&0.3&0.4546\\0.0044&0.0669&0.1741\end{array}\right]$\\
            $[r(1,2),r(2,2),r(3,2)]$ & \eqref{eq:reward} &[1, 0.125, 0.037]\\$[\discountFactor, \numStops]$ & \eqref{eq:cumulative-reward} & [0.9, 3]\\$[\epsilon, \zeta, \kappa, \nu, \psi]$ & \eqref{eq:spsa-parameters} & [0.5, 0.1, 0.6, 0.6, 0.1]\\
            \hline
        \end{tabular}        
        \label{tab:model-parameters-real}
    \end{table}
    
    \begin{figure}
        \centering
        \begin{subfigure}{0.45\textwidth}
        \includegraphics[width=0.9\textwidth]{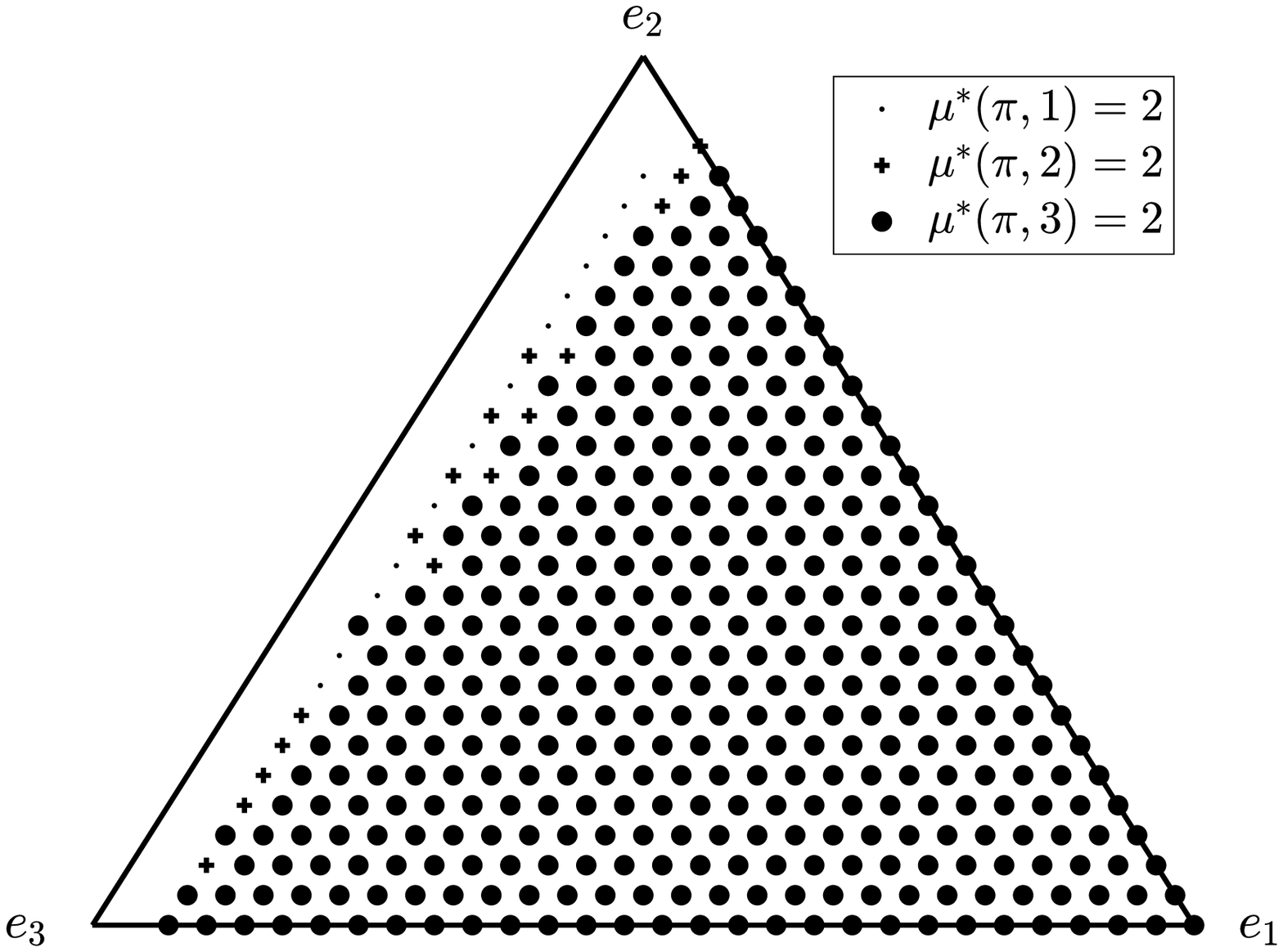}
        \caption{Value iteration algorithm.}
        \label{fig:value-iteration-real}
        \end{subfigure}
        \begin{subfigure}{0.45\textwidth}
            \includegraphics[width=0.9\textwidth]{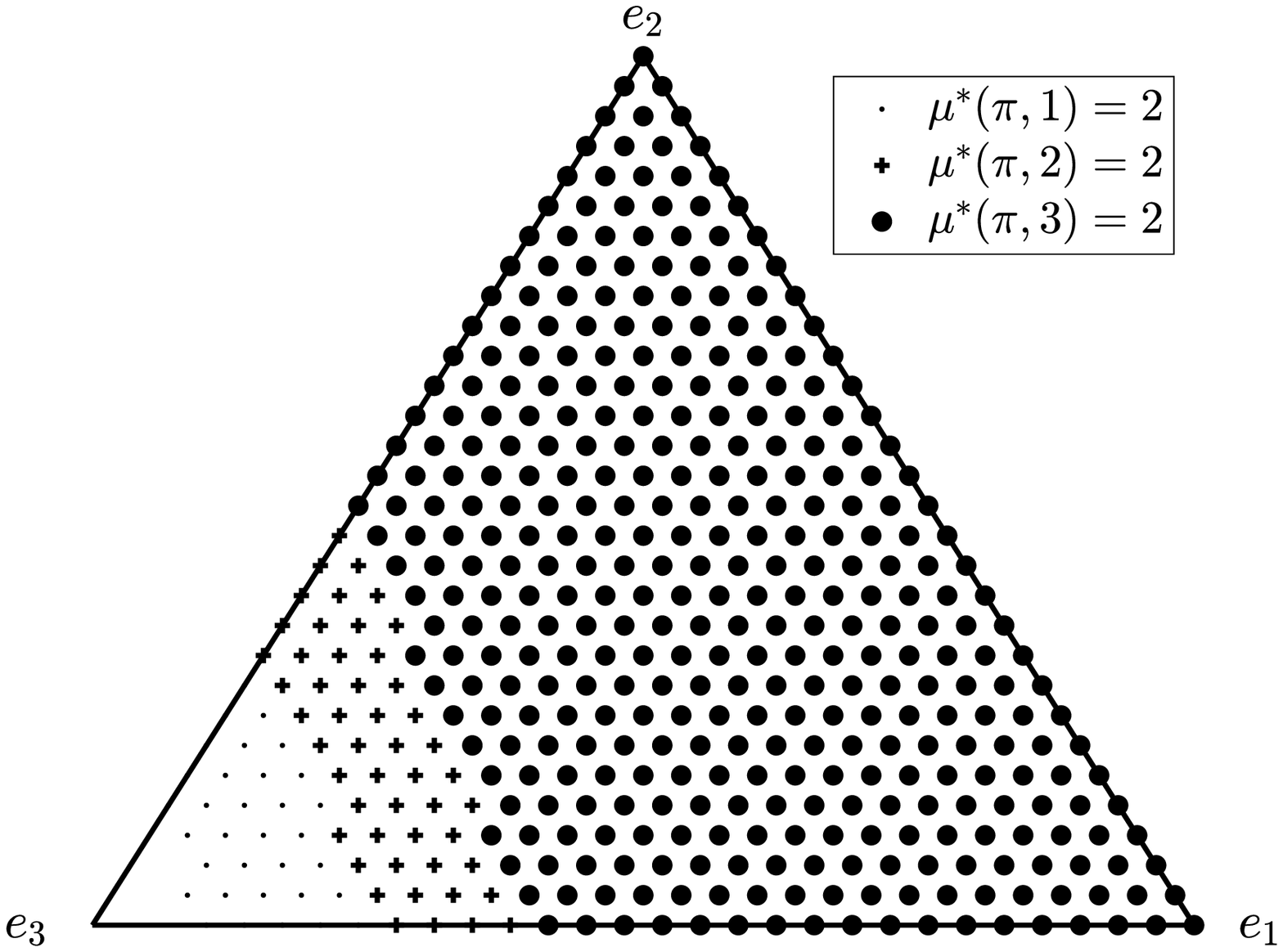}
            \caption{Policy gradient algorithm.}
            \label{fig:linear-policy-real}
        \end{subfigure}
        \caption{Optimal mining policy for energy-efficient mining problem in blockchain on a real Bitcoin dataset. The triangle represents the belief space for the energy-efficient mining problem in blockchain. The belief space has been discretized into 30 equal parts along each axis. The markers indicate the belief states where the optimal action is to mine in the blockchain. The optimal mining policy has a threshold and nested structure (Theorem~\ref{thm:monotonicity}).}
    \end{figure}

    We solved the optimal mining policy for the energy-efficient mining problem in blockchain using the value iteration algorithm (Fig.~\ref{fig:value-iteration-real}) and the optimal linear policy using the policy gradient algorithm (Fig.~\ref{fig:linear-policy-real}). One can observe that the optimal linear mining policy provides a similar performance as that of the optimal mining policy. Furthermore, the linear policy is suitable for resource-constrained IoT devices: it uses less memory and less computation to compute the optimal action.
    
    We also compared the optimal linear mining policy with other heuristic policies on the real Bitcoin dataset. The results are summarized in Table~\ref{tab:different-policy-real}. We observe that the optimal linear mining policy provides a significant improvement over other heuristic policies: $57\%$ improvement over the first $\numStops$ mining policy, $39\%$ improvement over the random policy, and $22\%$ improvement over reinforcement learning-based mining policy.

    To sum up, we solved the optimal mining policy for an energy-efficient mining problem in a blockchain device using real and synthetic model parameters. 
    Additionally, we conducted numerical experiments to compare the optimal mining policy with other heuristic policies.

    \section{Conclusion}
    \label{sec:conclusion}
    In this paper, we addressed: what are the optimal times for an IoT device to mine in a blockchain? We formulated the energy-efficient mining problem in blockchain as a multiple stopping time Bayesian sequential detection problem in POMDP. Computing the exact solution of the multiple stopping time POMDP is computationally intractable. So, using submodularity, we derived a useful mathematical structure that characterizes the optimal mining policy: the optimal policy is monotone in the belief state with respect to MLR order and, therefore, has a threshold structure. We exploited these structural results to derive necessary and sufficient conditions on the parameters of an optimal linear mining policy for the energy-efficient mining problem. This optimal linear policy can be computed offline and stored on the IoT device. This makes it suitable for IoT applications. Finally, we illustrated how the proposed multiple stopping time approach achieves energy-efficient mining in blockchain on synthetic data and a real Bitcoin dataset. We also studied the benefit of the optimal mining policy over other heuristic strategies for a blockchain-enabled IoT device. 
    
    \textbf{Acknowledgement}
    This research was supported in part by the U.S.\ Army Research Office grant  
    W911NF-21-1-0093 and National Science Foundation grant CCF-2112457.    
    \bibliography{anurag}{}
    \bibliographystyle{unsrtAbbv}

    \clearpage
    \section*{Supplementary material}
    \section{Proof of Theorem~\ref{thm:monotonicity} in Sec.\ref{sec:structural-results}}
    \label{proof:thm-monotonicity}
    
The value iteration algorithm is an iterative approach to solve Bellman's equation. However, in this paper, we use the value iteration algorithm to prove by  mathematical induction that the optimal policy has a threshold structure. The value iteration algorithm is as follows:
    \begin{align*}
        V_{k+1}(\pi, l)&=\max _{u \in\{1,2\}} Q_{k+1}(\pi, l, u)\\
        \mu_{k+1}(\pi, l)&=\arg \max_{u \in\{1,2\}}  Q_{k+1}(\pi, l, u), \\
        Q_{k+1}(\pi,\stopNumber,2) &= r^\intercal\pi+\rho\sum_y V_k(T(\pi,y),l+1)\sigma(\pi,y)\\
        Q_{k+1}(\pi,\stopNumber,1) &= \rho\sum_y V_k(T(\pi,y),l)\sigma(\pi,y)
    \end{align*}
        Here, $r:=[r(x_1,2)\; r(x_2,2)\; \ldots\; r(x_{\compPowerSetSize},2)]$ is the reward vector when the IoT device decides to mine.
    Define $W_k(\pi,l):=V_k(\pi,l)-V_k(\pi,l+1)$. The mine set $M^l_k$ and don't mine set $D^l_k$ to decide $l^{th}$ mining time instant at iteration $k$ of the value iteration algorithm is defined as:
    \begin{align}
    \label{eq:set-definition}
        \begin{aligned}            
            M^l_{k+1}&=\{\pi|r^\intercal\pi\geq\rho\sum_y W_k(T(\pi,y),l)\sigma(\pi,y)\}\\
            D^l_{k+1}&=\{\pi|r^\intercal\pi<\rho\sum_y W_k(T(\pi,y),l)\sigma(\pi,y)\}        
        \end{aligned}
    \end{align}
    Our main result uses the following Lemmas from~\cite{2018:VK-et-al}.
    \begin{lemma}
    \label{lemma:w-increasing}
    $W_k(\pi,l)\leq W_k(\pi,l+1)$, and
    $M^l_{k}\supset M^{l+1}_{k}$
    \end{lemma}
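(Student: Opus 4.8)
The plan is to prove both assertions by induction on the value-iteration index $k$, exploiting that the sets $M^l_k$ are defined through $W_{k-1}$, so that the set inclusion will fall out of the inequality $W_{k-1}(\pi,l)\le W_{k-1}(\pi,l+1)$. I would adopt the standard initialization $V_0\equiv 0$ together with the absorbing-state convention $V_k(\pi,L+1)=0$ from the equivalent discounted-cost formulation, so that the base case $W_0(\pi,l)=0\le W_0(\pi,l+1)$ holds with equality.

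The set inclusion is the easy half. Writing the mine condition at iteration $k$ as $r^\intercal\pi\ge \rho\sum_y W_{k-1}(T(\pi,y),l)\,\sigma(\pi,y)$, and using the inductive inequality $W_{k-1}(\cdot,l)\le W_{k-1}(\cdot,l+1)$ pointwise, the right-hand side is non-decreasing in $l$; hence $\pi\in M^{l+1}_k$ forces $\pi\in M^l_k$, which is exactly $M^l_k\supset M^{l+1}_k$. Everything therefore reduces to establishing $W_k(\pi,l)\le W_k(\pi,l+1)$, equivalently the discrete concavity $V_k(\pi,l)+V_k(\pi,l+2)\le 2V_k(\pi,l+1)$ in the stop index $l$.

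For the inductive step I would set $h(l):=\rho\sum_y V_k(T(\pi,y),l)\,\sigma(\pi,y)$ and $c:=r^\intercal\pi$, so that $V_{k+1}(\pi,l)=\max\{h(l),\,c+h(l+1)\}$. First observe that $h$ is concave in $l$: the induction hypothesis makes each $V_k(T(\pi,y),\cdot)$ concave, and concavity is preserved under the non-negative convex combination with weights $\sigma(\pi,y)$. The naive temptation---to read off concavity of $V_{k+1}$ from concavity of its two branches---fails, because the pointwise maximum of concave functions need not be concave; this is the main obstacle. The resolution is to exploit that the optimal action is monotone in $l$: mining is optimal iff $c\ge h(l)-h(l+1)$, and the right-hand side $h(l)-h(l+1)$ is non-decreasing in $l$ precisely by the induction hypothesis, so the optimal decisions form the pattern mine-$\cdots$-mine-don't-$\cdots$-don't as $l$ increases.

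Finally I would verify the discrete concavity inequality on any consecutive triple $l,l+1,l+2$ by a short case analysis over the four admissible action patterns permitted by this monotonicity: $(2,2,2)$, $(2,2,1)$, $(2,1,1)$, $(1,1,1)$. In the two pure cases the inequality is just concavity of $h$ (index-shifted by one in the all-mine case). In the two switching cases it reduces, after cancellation, exactly to the decision inequality $c\ge h(l+1)-h(l+2)$ (respectively $c\le h(l+1)-h(l+2)$) that defines the action chosen at $l+1$, and these hold by construction. This closes the induction. The crux is that the induction hypothesis does double duty: it supplies concavity of the continuation value $h$ and, simultaneously, the monotone threshold structure in $l$ that excludes the non-concave patterns such as $(1,2,2)$ which would otherwise break the argument.
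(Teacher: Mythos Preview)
Your proposal is correct and follows essentially the same approach as the paper: induction on $k$, the set inclusion $M_k^l\supset M_k^{l+1}$ read off directly from the defining inequality together with the inductive bound on $W_{k-1}$, and then a four-case analysis according to which actions are optimal at the consecutive indices $l,l+1,l+2$ (the paper phrases these as the cases $\pi\in M_k^{l+2}$, $\pi\in M_k^{l+1}\cap D_k^{l+2}$, $\pi\in M_k^{l}\cap D_k^{l+1}$, $\pi\in D_k^{l}$). Your framing in terms of discrete concavity of $l\mapsto V_k(\pi,l)$ and the resulting monotone action pattern is a clean way to explain \emph{why} only those four cases arise, but the underlying verification is identical to the paper's.
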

	\begin{proof}  The proof is by forward-induction over $k$ and backward-induction over $\stopNumber$, i.e., we assume that the lemma holds for $\stopNumber+1,\ldots,\numStops$ for all values of $k$, and upto $k$ for $\stopNumber$. The base case for mathematical induction is $W_0(\cdot,\cdot)=0,\,\stopset_k^{L+1}=\phi,\,\forall k$.
		The induction step is as follows. Note that 
		\begin{align*}
		W_{k}(\pi, l) & =\bigg[\rho \sum_y W_{k-1}(T(\pi, y), l) \sigma(\pi, y)\bigg] \,\mathcal{I}_{\contset_{k}^l}(\pi) \\
		&+\reward ^\intercal \compPowerDistribution\, \mathcal{I}_{\contset_{k}^{l+1} \,\cap \,\stopset_{k}^l}(\pi) \\
		& +\bigg[\rho \sum_y W_{k-1}(T(\pi, y), l+1) \sigma(\pi, y)
		\bigg] \mathcal{I}_{\stopset_{k}^{l+1}}(\pi) 
		\end{align*}    
				Consider the following cases:\\
		(a) $\pi \in \stopset_{k}^{l+2}$. This implies $\pi \in \stopset_{k}^{l+1}, \stopset_{k}^{l}$ and $\pi \notin \contset_{k}^{l+2}, \contset_{k}^{l+1},\contset_{k}^{l}$.
		\begin{align*}
		W_{k+1}(\pi,l) - W_{k+1}(\pi,l+1) = \\
		\rho \sum_\obsv \big( W_{k}(T(\pi,y),l+1) - W_{k}(T(\pi,y),l+1)\big)\, \sigma(\pi,y) \leq 0
		\end{align*}\\
		(b) $\pi \in \stopset_{k}^{l+1}\cap\contset_{k}^{l+2}$. This implies $\pi \in  \stopset_{k}^{l}$ and $\pi \notin \stopset_{k}^{l+2}, \contset_{k}^{l+1},\contset_{k}^{l}$.
		\begin{align*}
		W_{k+1}(\pi,l) - W_{k+1}(\pi,l+1) = \\
		\rho \sum_\obsv \big( W_{k}(T(\pi,y),l+1) \big)\, \sigma(\pi,y) -\reward^\p\compPowerDistribution \leq 0
		\end{align*}
		Last inequality holds using $\compPowerDistribution\in\stopset_k^{l+1}$ and~\eqref{eq:set-definition}\\
		(c) $\pi \in \stopset_{k}^{l}\cap\contset_{k}^{l+1}$. This implies $\pi \in  \contset_k^{l+2}$ and $\pi \notin \stopset_{k}^{l+2}, \stopset_{k}^{l+1},\contset_{k}^{l}$.\begin{align*}
		W_{k+1}(\pi,l) - W_{k+1}(\pi,l+1) = \\
		\rho \sum_\obsv \big( W_{k}(T(\pi,y),l) \big)\, \sigma(\pi,y) -\reward^\p\compPowerDistribution \leq 0
		\end{align*}
		Last inequality holds using $\compPowerDistribution\in\contset_k^{l+1}$ and~\eqref{eq:set-definition}\\
		(d) $\pi \in \contset_{k}^{l}$. This implies $\pi \in  \contset_{k}^{l+1},\contset_{k}^{l+2}$ and $\pi \notin \stopset_{k}^{l+2}, \stopset_{k}^{l+1},\stopset_{k}^{l}$.
		\begin{align*}
		W_{k+1}(\pi,l) - W_{k+1}(\pi,l+1) = \\
		\rho \sum_\obsv \big( W_{k}(T(\pi,y),l) - W_{k}(T(\pi,y),l)\big)\, \sigma(\pi,y) \leq 0
		\end{align*}\\
		Thus we have showed that $W_{k+1}(\pi,l) \leq W_{k+1}(\pi,l+1)$. From the definition of  the continue and stopping sets in~\eqref{eq:set-definition}, it then  follows that $\stopset^{l}_{k+1}\supset \stopset^{l+1}_{k+1}$.

	\end{proof}    
    Next, we define a submodular function. This is required to prove that the mining policy is monotone.
    \begin{definition}[Submodular function]
        A function $f(\pi,u)$ is submodular if $f(\pi, u)-f(\pi, \bar{u}) \geq f(\bar{\pi}, u)-f(\bar{\pi}, \bar{u}) \text { for } u \geq \bar{u}, \pi \geq_r \bar{\pi}$.
    \end{definition}      
    To show the existence of an optimal policy which is decreasing in $\pi\in\scrL(e_i,\bar{\compPowerDistribution}),\;i\in\{1,\numStops\}$ (Theorem~\ref{subthm:monotonicity}), we use the following result on submodular functions from~\cite{2011:DT}:
    \begin{theorem}
        \label{thm:submodular-decreasing}
        If $f(\pi, u)$ is submodular, then there exists an optimal policy $u^*(\pi)=\arg\max_{u \in \mathcal{U}}(\pi, u)$ that is MLR decreasing in $\pi$.  
    \end{theorem}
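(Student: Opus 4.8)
The plan is to reduce the vector optimization defining $u^*$ to a single scalar threshold, which is possible because the action set $\mathcal{U}=\{1,2\}$ is binary and totally ordered. First I would introduce the \emph{incremental value}
\begin{align*}
\Delta(\pi):=f(\pi,2)-f(\pi,1),
\end{align*}
so that action $2$ is optimal exactly when $\Delta(\pi)\geq 0$ and action $1$ is optimal when $\Delta(\pi)\leq 0$. Instantiating the submodularity hypothesis at $u=2,\ \bar u=1$ shows that the increment inherits a decreasing-differences property along the MLR order: for every MLR-comparable pair $\pi\geq_r\bar\pi$ the defining inequality rearranges into a monotonicity statement for $\Delta$, namely that $\Delta$ is MLR-decreasing. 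This single fact carries essentially all the content of the theorem.

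Second, I would build an explicit decreasing selection from the $\arg\max$ correspondence. Since $\mathcal{U}$ is finite, the $\arg\max$ is always nonempty, so the only genuine issue is resolving ties on the indifference set $\{\pi:\Delta(\pi)=0\}$ consistently. I would fix the tie-breaking rule $u^*(\pi)=2$ whenever $\Delta(\pi)\geq 0$ and $u^*(\pi)=1$ otherwise, giving a well-defined map. Monotonicity then follows directly on MLR-comparable beliefs: if $\pi\geq_r\bar\pi$ and $u^*(\pi)=2$, then $\Delta(\pi)\geq 0$, and the MLR-monotonicity of $\Delta$ gives $\Delta(\bar\pi)\geq\Delta(\pi)\geq 0$, hence $u^*(\bar\pi)=2\geq u^*(\pi)$; the complementary case $u^*(\pi)=1$ renders the inequality $u^*(\pi)\leq u^*(\bar\pi)$ vacuous. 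Thus $\pi\geq_r\bar\pi\Rightarrow u^*(\pi)\leq u^*(\bar\pi)$, which is precisely the claimed MLR-decreasing structure, and the indifference set $\{\Delta=0\}$ is exactly the threshold boundary $\Tau$ separating the mine and don't-mine regions when this lemma is later applied to $Q_{k+1}(\pi,l,\cdot)$.

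Third, the step I expect to require the most care is legitimizing the phrase ``$\Delta$ is MLR-decreasing'' given that $\geq_r$ is only a \emph{partial} order on the simplex $\compPowerDistributionSpace$. The scalar-threshold argument is clean precisely because the definition of a decreasing policy constrains only MLR-\emph{comparable} pairs, so no global total order is needed; the delicate point is instead ensuring the single threshold $\{\Delta\geq 0\}$ is simultaneously consistent along every MLR chain, which here is automatic because $\{\Delta\geq 0\}$ is one MLR down-set. For the general multi-action lattice formulation this consistency is exactly what Topkis's monotone-selection theorem~\cite{2011:DT} supplies, and I would invoke it to close the argument rather than re-derive the lattice machinery.
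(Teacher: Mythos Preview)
The paper does not prove this statement at all; it is simply quoted as a known result from Topkis~\cite{2011:DT} and invoked as a black box. Your direct argument via the increment $\Delta(\pi)=f(\pi,2)-f(\pi,1)$ is a correct, self-contained proof specialized to the binary action set $\mathcal{U}=\{1,2\}$, which is the only case the paper ever needs. The paper's route (a citation) buys generality over arbitrary finite action lattices at the cost of an external reference; your route buys transparency and makes the threshold boundary $\{\Delta=0\}$ explicit, which dovetails nicely with how the result is later applied to $Q_{k+1}(\pi,l,\cdot)$. One caution on signs: the paper's stated Definition of submodularity is written with $\geq$, which taken literally gives \emph{increasing} differences and would force $u^*$ to be MLR-\emph{increasing}; however, the paper's own proof of Theorem~\ref{thm:q-submodular} verifies that $Q(\pi,l,2)-Q(\pi,l,1)$ is MLR-\emph{decreasing}, so the intended inequality is the standard $\leq$, and your reading of ``submodular'' as decreasing differences is the correct one in context. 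Your third paragraph, which ultimately defers to Topkis for the general lattice case, is then redundant with what the paper already does and can be dropped.
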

    We use the MLR order for $\pi$ in the set $\scrL(e_i,\bar{\pi}),\;i\in\{1,\numStops\}$. 
    Hence, all we need to show is that $Q(\pi,\stopNumber,\action)$ is a submodular function. Before deriving the main result, we present a property of the Bayesian update from~\cite{2016:VK}: MLR ordering is preserved under Bayes' rule.
    \begin{theorem}
        If the transition matrix $\transMatrix$ and the observation matrix $\obsvMatrix$ are TP2. Then, for $\pi_1\geq_r\pi_2$, $T(\pi_1,\cdot)\geq T(\pi_2,\cdot)$ and $\sigma(\pi_1,\cdot)\geq \sigma(\pi_2,\cdot)$.
    \end{theorem}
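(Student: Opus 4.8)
The plan is to reduce the entire statement to one elementary fact — that a product of TP2 matrices is again TP2 (the Cauchy--Binet/basic composition formula) — together with the observation that ``$\compPowerDistribution_1\geq_r\compPowerDistribution_2$'' is itself a TP2 condition on a stacked matrix. Throughout I keep $\geq_r$ for the MLR order of Definition~\ref{def:mlr}, and I recall the filter $T(\compPowerDistribution,\obsv)=\obsvMatrix_\obsv\transMatrix^\intercal\compPowerDistribution/\sigma(\compPowerDistribution,\obsv)$ and $\sigma(\compPowerDistribution,\obsv)=\mathbb{1}_\compPowerSetSize^\intercal\obsvMatrix_\obsv\transMatrix^\intercal\compPowerDistribution$ from~\eqref{eq:bayesian-update}.

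First I would record two bookkeeping facts. (i) For $\compPowerDistribution_1,\compPowerDistribution_2\in\compPowerDistributionSpace$, the relation $\compPowerDistribution_1\geq_r\compPowerDistribution_2$ is equivalent to the $2\times\compPowerSetSize$ matrix $\left[\begin{smallmatrix}\compPowerDistribution_2^\intercal\\\compPowerDistribution_1^\intercal\end{smallmatrix}\right]$ being TP2 in the sense of Definition~\ref{def:tp2}, since its second-order minors are exactly $\compPowerDistribution_1(j)\compPowerDistribution_2(i)-\compPowerDistribution_1(i)\compPowerDistribution_2(j)\geq0$ for $i<j$. (ii) Multiplying a vector by the nonnegative diagonal $\obsvMatrix_\obsv=\operatorname{diag}(\obsvMatrix(1,\obsv),\ldots,\obsvMatrix(\compPowerSetSize,\obsv))$ preserves $\geq_r$, because each pairwise ratio test is merely rescaled by the factor $\obsvMatrix(i,\obsv)\obsvMatrix(j,\obsv)\geq0$; likewise, dividing each belief independently by its own positive scalar $\sigma(\compPowerDistribution,\obsv)$ leaves $\geq_r$ unchanged, as the defining inequality is invariant under positive scaling of either argument.

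For the filter claim I would proceed in two steps. \textbf{Step A (transition).} Set $\nu_k=\transMatrix^\intercal\compPowerDistribution_k$ and observe $\left[\begin{smallmatrix}\nu_2^\intercal\\\nu_1^\intercal\end{smallmatrix}\right]=\left[\begin{smallmatrix}\compPowerDistribution_2^\intercal\\\compPowerDistribution_1^\intercal\end{smallmatrix}\right]\transMatrix$ is a product of TP2 matrices — the stacked matrix by fact (i), and $\transMatrix$ by assumption~\ref{assumption:slow-mc} — hence TP2 by composition, which by fact (i) again gives $\nu_1\geq_r\nu_2$. \textbf{Step B (observation).} Apply fact (ii) with $\obsvMatrix_\obsv$ to obtain $\obsvMatrix_\obsv\nu_1\geq_r\obsvMatrix_\obsv\nu_2$, and then normalize to conclude $T(\compPowerDistribution_1,\obsv)\geq_r T(\compPowerDistribution_2,\obsv)$ for every $\obsv$. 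For the normalizer $\sigma$, I would read it as a vector in $\obsv$: $\sigma(\compPowerDistribution_k,\cdot)=\obsvMatrix^\intercal\nu_k$ with $\nu_k$ as in Step~A. Since $\obsvMatrix$ is TP2 by~\ref{assumption:obsv-tp2} and $\nu_1\geq_r\nu_2$, the identical composition argument (now with $\obsvMatrix$ in the role of $\transMatrix$) yields $\obsvMatrix^\intercal\nu_1\geq_r\obsvMatrix^\intercal\nu_2$; that is, the observation distributions are MLR ordered, and since MLR dominance implies first-order stochastic dominance, this delivers the asserted $\sigma(\compPowerDistribution_1,\cdot)\geq\sigma(\compPowerDistribution_2,\cdot)$.

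The only nontrivial ingredient is the composition fact used in Steps~A and for $\sigma$: a second-order minor of a product of matrices expands, via Cauchy--Binet, into a sum of products of second-order minors of the factors, each nonnegative by TP2-ness, so the product is TP2. I would cite this from the total-positivity literature rather than reprove it. Everything else is the scaling invariance of $\geq_r$, so the main ``obstacle'' is not any hard estimate but simply arranging the matrix-factorization viewpoint cleanly, so that both the filter monotonicity and the stochastic-dominance of $\sigma$ drop out of a single lemma applied twice.
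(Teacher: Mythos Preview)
Your argument is correct. The paper itself does not supply a proof of this theorem; it simply quotes the result from~\cite{2016:VK} as a known property of the HMM filter (``MLR ordering is preserved under Bayes' rule''). What you have written is precisely the standard proof one finds in that reference: recast $\geq_r$ as TP2-ness of the stacked $2\times\compPowerSetSize$ matrix, push the TP2 property through $\transMatrix$ (and through $\obsvMatrix$ for $\sigma$) via the Cauchy--Binet composition lemma, and observe that diagonal scaling by $\obsvMatrix_\obsv$ and normalization by $\sigma$ are MLR-neutral. Your reading of the ambiguous ``$\sigma(\compPowerDistribution_1,\cdot)\geq\sigma(\compPowerDistribution_2,\cdot)$'' as first-order stochastic dominance (obtained from the stronger MLR dominance you actually prove) is the intended one and is what the downstream submodularity argument requires. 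There is nothing to add or correct.
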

    
    The following result guarantees the existence of a monotonically decreasing mining policy.
    \begin{theorem}
        \label{thm:q-submodular}
        $Q(\pi,\stopNumber,\action)$ is a submodular function $\forall \stopNumber$.
    \end{theorem}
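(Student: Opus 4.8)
The plan is to reduce submodularity to a monotonicity statement about a scalar difference, and then to establish that monotonicity by a value-iteration induction. By the definition of a submodular function, and since the action set is $\{1,2\}$, it suffices to show that the difference
\[
D_{k+1}(\compPowerDistribution,\stopNumber) := Q_{k+1}(\compPowerDistribution,\stopNumber,2)-Q_{k+1}(\compPowerDistribution,\stopNumber,1)
\]
is monotone with respect to the MLR order $\geq_r$ along each of the totally ordered lines $\scrL(e_1,\bar{\compPowerDistribution})$ and $\scrL(e_{\compPowerSetSize},\bar{\compPowerDistribution})$ defined in~\eqref{eq:totally-ordered-subsets}. Substituting the value-iteration expressions for $Q_{k+1}(\compPowerDistribution,\stopNumber,2)$ and $Q_{k+1}(\compPowerDistribution,\stopNumber,1)$ and using $W_k(\compPowerDistribution,\stopNumber)=V_k(\compPowerDistribution,\stopNumber)-V_k(\compPowerDistribution,\stopNumber+1)$ gives
\[
D_{k+1}(\compPowerDistribution,\stopNumber)=\reward^\p\compPowerDistribution-\rho\sum_\obsv W_k\big(T(\compPowerDistribution,\obsv),\stopNumber\big)\,\sigma(\compPowerDistribution,\obsv),
\]
so it is enough to control the two terms separately.

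First I would treat the linear reward term. Since $\reward(\compPowerRV,2)$ is decreasing in $\compPowerRV$ by~\eqref{eq:reward}, and since $\compPowerDistribution_1\geq_r\compPowerDistribution_2$ implies first-order stochastic dominance, the map $\compPowerDistribution\mapsto\reward^\p\compPowerDistribution$ is monotone in the MLR order; this is exactly the direction that drives the threshold, i.e.\ mining is favoured for beliefs concentrated on low total computing power. Next I would treat the expectation term $\sum_\obsv W_k(T(\compPowerDistribution,\obsv),\stopNumber)\,\sigma(\compPowerDistribution,\obsv)$, invoking three ingredients: (i) the induction hypothesis that $W_k(\cdot,\stopNumber)$ is MLR-monotone in its belief argument; (ii) the filtering result stated earlier, namely that under the TP2 assumptions~\ref{assumption:slow-mc}--\ref{assumption:obsv-tp2} the Bayesian update preserves the MLR order, $\compPowerDistribution_1\geq_r\compPowerDistribution_2\Rightarrow T(\compPowerDistribution_1,\cdot)\geq_r T(\compPowerDistribution_2,\cdot)$; and (iii) the standard fact that, for a TP2 observation matrix $\obsvMatrix$, averaging an MLR-monotone function against the normalization weights $\sigma(\compPowerDistribution,\cdot)$ preserves MLR-monotonicity, even though those weights themselves depend on $\compPowerDistribution$. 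Combining (i)--(iii) shows the expectation term is MLR-monotone, and hence so is $D_{k+1}$, which is the claimed submodularity.

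The remaining work, and the main obstacle, is to justify the induction hypothesis that $W_k(\cdot,\stopNumber)$ is MLR-monotone along the lines $\scrL(e_1,\bar{\compPowerDistribution})$ and $\scrL(e_{\compPowerSetSize},\bar{\compPowerDistribution})$. This cannot be read off termwise, because $W_k$ is a \emph{difference} of value functions and its monotonicity is not inherited directly from that of $V_k$. The plan is a joint forward induction on the value-iteration index $k$: the base case is immediate from $W_0(\cdot,\cdot)=0$, and in the inductive step I would propagate MLR-monotonicity of $V_k(\cdot,\stopNumber)=\max_\action Q_k(\cdot,\stopNumber,\action)$ (a pointwise maximum of functions monotone in the same direction, restricted to a line where $\geq_r$ is a total order, is again monotone) together with the nesting $\stopset^{\stopNumber-1}_k\supset\stopset^{\stopNumber}_k$ and the inequality $W_k(\compPowerDistribution,\stopNumber)\leq W_k(\compPowerDistribution,\stopNumber+1)$ supplied by Lemma~\ref{lemma:w-increasing}. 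The delicate point is that $\geq_r$ is only a partial order on $\compPowerDistributionSpace$; restricting to $\scrL(e_1,\bar{\compPowerDistribution})$ and $\scrL(e_{\compPowerSetSize},\bar{\compPowerDistribution})$, on which $T(\compPowerDistribution,\obsv)$ stays MLR-comparable, is precisely what makes the filter-monotonicity step (ii) and the averaging step (iii) applicable and closes the induction.
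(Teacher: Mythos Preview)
Your decomposition into ``the linear reward term $r^\intercal\pi$'' plus ``the expectation term $\sum_y W_k(T(\pi,y),l)\,\sigma(\pi,y)$'' is where the argument breaks. The induction hypothesis you need---that $W_k(\cdot,l)$ is MLR-monotone---is false in general. On the intermediate region $D_k^{l+1}\cap M_k^l$ (nonempty by Lemma~\ref{lemma:w-increasing}) the piecewise formula gives $W_k(\pi,l)=r^\intercal\pi$, and since the reward vector $r$ is coordinate-wise decreasing by~\eqref{eq:reward}, this is MLR-\emph{decreasing} there. Yet crossing from $D_k^{l+1}\cap M_k^l$ into $D_k^l$ the definition~\eqref{eq:set-definition} forces $W_k(\pi,l)>r^\intercal\pi$, so $W_k$ is not globally decreasing either. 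No amount of propagating the monotonicity of $V_k$ or invoking the nesting of mine sets will repair this: the difference of two MLR-monotone functions need not be monotone, and here it is not.

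The paper's proof avoids this by \emph{not} separating the two terms. The algebraic identity
\[
r^\intercal\pi-\rho\sum_y W_k(T(\pi,y),l)\,\sigma(\pi,y)
= r^\intercal(I-\rho P^\intercal)\pi - \rho\sum_y\overline{W}_k(T(\pi,y),l)\,\sigma(\pi,y),
\]
with the centered quantity $\overline{W}_k(\pi,l):=W_k(\pi,l)-r^\intercal\pi$, is the missing step. After centering, $\overline{W}_k$ vanishes identically on $D_k^{l+1}\cap M_k^l$, is strictly positive on $D_k^l$, and strictly negative on $M_k^{l+1}$ (all by~\eqref{eq:set-definition}); since $D_k^l$ consists of MLR-large beliefs and $M_k^{l+1}$ of MLR-small ones, the cross-region comparisons become automatic, and the within-region induction closes via exactly the filter-preservation facts (ii)--(iii) you listed. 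The linear piece $r^\intercal(I-\rho P^\intercal)\pi$ is then handled directly using the TP2 assumption on $P$. In short, your ingredients are the right ones, but the induction must be run on $\overline{W}_k$, not on $W_k$.
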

    \begin{proof}  
    Since, $Q_{k+1}(\pi, l, 2)-$ $Q_{k+1}(\pi, l, 1)=-\rho \sum_y W_k(T(\pi, y), l) \sigma(\pi, y)+r^{\intercal} \pi$. We need to show that $-\rho \sum_y W_k(T(\pi, y), l) \sigma(\pi, y)+r^\intercal \pi$ is MLR decreasing in $\pi$.
    $$
    \begin{array}{l}
    -\rho \sum_y W_k(T(\pi, y), l) \sigma(\pi, y)+r^\intercal \pi \\
    =-\sum_y\left(\left(\rho W_k(T(\pi, y), l)+\rho r^\intercal T(\pi, y)\right)\right. \\
    \left.+\left(r^\intercal \pi-\rho r^\intercal T(\pi, y)\right)\right) \sigma(\pi, y) \\
    =-\rho \sum_y\left(W_k(T(\pi, y), l)-r^\intercal T(\pi, y)\right) \sigma(\pi, y)\\ 
    +r^\intercal\left(I-\rho \transMatrix^\intercal\right) \pi
    \end{array}
    $$
    The term $r^\intercal\left(I-\rho \transMatrix^\intercal\right) \pi$ in is MLR decreasing in $\pi$ due to our assumptions in Sec.\ref{sec:assumptions}. Hence, to show that $-\rho \sum_y W_k(T(\pi, y), l) \sigma(\pi, y)+r^\intercal \pi$ is MLR decreasing in $\pi$ it is sufficient to show that $W_k(\pi, l)-r^\intercal \pi$ is MLR decreasing in $\pi$. Define, $\overline{W}_k(\pi, l) := W_k(\pi, l)-r^\intercal \pi$.
    $$
    \begin{aligned}
    V_k(\pi, l) & =\left(r^\intercal \pi+\rho \sum_y V_{k-1}(T(\pi, y), l-1) \sigma(\pi, y)\right) \mathcal{I}_{M_k^{l}}\\ 
    &+\left(\rho \sum_y V_{k-1}(T(\pi, y), l) \sigma(\pi, y)\right) \mathcal{I}_{D_k^{l}}
    \end{aligned}
    $$
    where $\mathcal{I}_{D_k^l}$ and $\mathcal{I}_{M_k^l}$ are indicator functions on the don't mine and mine sets, respectively, for each iteration $k$. As $M_k^{l+1} \subset M_k^l$,
    \begin{align*}
    W_k(\pi, l) & =\left(\rho \sum_y W_{k-1}(T(\pi, y), l) \sigma(\pi, y)\right) \mathcal{I}_{D_k^l}(\pi)\\ 
    &+r^\intercal \pi \mathcal{I}_{D_k^{l+1} \cap M_k^l}(\pi) \\
    & +\left(\rho \sum_y W_{k-1}(T(\pi, y), l+1) \sigma(\pi, y)\right) \mathcal{I}_{M_k^{l+1}(\pi)}\\
    \Rightarrow\overline{W}_k(\pi, l)&=\displaystyle\sum_y\widetilde{W}_{k-1}(T(\pi, y),l) \mathcal{I}_{D_k^{l}}(\pi)\\
    &+\displaystyle\sum_y\widetilde{W}_{k-1}(T(\pi, y),l+1)\mathcal{I}_{M_k^{l+1}}(\pi)\\
    \widetilde{W}_k(\pi,l)&:= \rho \overline{W}_{k}(\pi, l) \sigma(\pi, y)-r^\intercal(I-\rho P)^{\intercal} \pi
    \end{align*}    
    We prove using induction that $\overline{W}_k(\pi, l)$ is MLR increasing in $\pi$, using the recursive relation over $k$. For $k=0, \overline{W}_0(\pi, l)=$ $W_0(\pi, l)-r^\intercal \pi=V_0(\pi, l)-V_0(\pi, l+1)-r^\intercal \pi$. The initial conditions of the value iteration algorithm can be chosen such that $\overline{W}_0(\pi, l)$ is increasing in $\pi$.
    
    Next, we show that $\overline{W}_k(\pi, l)$ is MLR increasing in $\pi$, if $\overline{W}_{k-1}(\pi, l)$ is MLR increasing in $\pi$. For $\pi_1 \geq_r \pi_2$, consider the following cases: (a) $\pi_1, \pi_2 \in M_k^{l+1}$, (b) $\pi_2 \in M_k^{l+1}, \pi_1 \in D_k^{l+1}$, (c) $\pi_1, \pi_2 \in D_k^{l+1} \cap M_k^l$, (d) $\pi_2 \in D_k^{l+1} \cap M_k^l, \pi_1 \in D_k^l$, (e) $\pi_1, \pi_2 \in D_k^l$. For cases (a), (e), $\overline{W}_k\left(\pi_1, l\right) \geq \overline{W}_k\left(\pi_2, l\right)$ by the induction assumption. For case (b), $\overline{W}_k\left(\pi_1, l\right) \geq \overline{W}_k\left(\pi_2, l\right)$ by definition of $M^{l+1}_k$ and $D^{l+1}_k$~\eqref{eq:set-definition}. For case (c), $\overline{W}_k\left(\pi_1, l\right)=\overline{W}_k\left(\pi_2, l\right)=0$.  For case (d),  $\overline{W}_k\left(\pi_1, l\right) \geq \overline{W}_k\left(\pi_2, l\right)$ by definition of $M^{l}_k$ and $D^{l}_k$.
    \end{proof}
    
    Theorem~\ref{subthm:threshold} follows from the monotonicity property. Suppose $M^l$ is a disconnected set. We can find a line $\scrL(e_1,\bar{\compPowerDistribution}),\;i\in\{1,\numStops\}$ that passes through the two disconnected components of $M^l$. Existence of disconnected set would imply that $u^*(\pi)$ is not monotonic in the set $\scrL(e_i,\bar{\compPowerDistribution}),\;i\in\{1,\numStops\}$ with respect to the MLR order. This is a contradiction.
    
    The proof of Theorem~\ref{subthm:nested} follows from Lemma~\ref{lemma:w-increasing}.

    \section{Proof of Theorem~\ref{thm:parameter-condition} in Sec.\ref{sec:simulations}}
    \label{proof:thm-parameter-condition}
    $D^l$ is non-empty implies $e_{\compPowerSetSize-1}$ should be in $D^l$. This implies $\theta_\stopNumber(1)\geq 0, \forall \stopNumber$. We first derive conditions on $\begin{bmatrix}\theta_\stopNumber & 1 & 0\end{bmatrix}\left[\begin{smallmatrix}-1\\\compPowerDistribution\end{smallmatrix}\right]$ to be MLR decreasing on $\scrL(e_i,\bar{\compPowerDistribution}),i\in\{1,\compPowerSetSize\}$. For $\compPowerDistribution_1\geq_r \compPowerDistribution_2$,
        $\begin{bmatrix}\theta_\stopNumber & 1 & 0\end{bmatrix}\left[\begin{smallmatrix}-1\\\compPowerDistribution_1\end{smallmatrix}\right]\leq \begin{bmatrix}\theta_\stopNumber & 1 & 0\end{bmatrix}\left[\begin{smallmatrix}-1\\\compPowerDistribution_2\end{smallmatrix}\right]
        \Leftrightarrow\bracketSquare{\theta(2)\; \ldots\; \theta(\compPowerSetSize-1)\; 1\; 0}(\compPowerDistribution_1-\compPowerDistribution_2)\leq 0$. 
    For $\compPowerDistribution_1,\compPowerDistribution_2\in\scrL(e_\compPowerSetSize,\bar{\compPowerDistribution})$,
        $\bracketSquare{\theta(2)\; \ldots\; \theta(\compPowerSetSize-1)\; 1\; 0}(e_\compPowerSetSize-\bar{\pi})\leq 0
        \Leftrightarrow \bracketSquare{0\; \theta(2)\; \ldots\; \theta(\compPowerSetSize-1)\; 1\; 0}\bar{\compPowerDistribution}\geq 0$.
        For $\compPowerDistribution_1,\compPowerDistribution_2\in\scrL(e_1,\bar{\compPowerDistribution})$,
        $\bracketSquare{\theta(2)\; \ldots\; \theta(\compPowerSetSize-1)\; 1\; 0}(e_1-\bar{\pi})\geq 0
        \Leftrightarrow \theta(2)-\bracketSquare{\theta(2)\; \ldots\; \theta(\compPowerSetSize-1)\; 1\; 0}\bar{\compPowerDistribution}\geq 0$
    
    The last condition on the parameters can be derived by using the nested property of the mining set $M^{\stopNumber}$, $
        \policy_\theta(\compPowerDistribution,\stopNumber)\geq \policy_\theta(\compPowerDistribution,\stopNumber+1)\Leftrightarrow \begin{bmatrix}\theta_\stopNumber-\theta_{\stopNumber+1} & 1 & 0\end{bmatrix}\left[\begin{smallmatrix}-1\\\compPowerDistribution\end{smallmatrix}\right]\geq 0$

\end{document}